\begin{document}

\author{Mordecai Golin\inst{1} 
\and John Iacono\inst{2}%
\thanks{Supported in part by NSF grant CCF-1018370 and by an Alfred P.~Sloan fellowship.}
\and Danny Krizanc\inst{3} 
\and Rajeev Raman \inst{4}%
\thanks{Research done while on study leave from the University of Leicester.} 
\and S.~Srinivasa~Rao\inst{5}%
\thanks{Supported by the Seoul National University Foundation Research Expense.}
\and Sunil Shende\inst{6}
}

\institute{
Hong Kong University of Science and Technology
\and
Polytechnic Institute of New York University 
\and
Wesleyan University
\and
University of Leicester
\and
Seoul National University
\and
Rutgers University, Camden}

\title{Encoding 2D Range Maximum Queries}
\maketitle

\begin{abstract}
We consider the \emph{two-dimensional range maximum query (2D-RMQ)} 
problem: given an array $A$ of ordered values, to pre-process
it so that we can find the position of the largest element in  a
(user-specified) range of rows and range of columns.  
We focus on determining the \emph{effective} entropy of 2D-RMQ, i.e., how many bits are
needed to encode $A$ so that 2D-RMQ queries can be answered \emph{without} access to $A$.
We give tight upper and lower bounds on the expected
effective entropy for the case when $A$ contains independent identically-distributed
random values, and new upper and lower bounds for arbitrary $A$, for the case when
$A$ contains few rows.  The latter results improve upon
upper and lower bounds by Brodal et al. (ESA 2010). 
We also give some efficient data structures for 2D-RMQ whose space usage is close to
the effective entropy.
 \end{abstract}

\section{Introduction}
\label{sec:intro}
In this paper, we study the two-dimensional 
\emph{range maximum query} problem (2D-RMQ). 
The input to this problem is a two dimensional~$m$ by $n$ array~$A$ of~$N=m\cdot n$ 
elements from a totally ordered set. We assume w.l.o.g. 
that~\mbox{$m \le n$} and that all the entries of~$A$ are 
distinct (identical entries of~$A$ are ordered lexicographically by their index).
We consider queries of the following types.
A 1-sided query consists of the positions in the array in the range~\mbox{$q=[1\cdots m] \times [1\cdots j]$}, where~\mbox{$1 \le j \le n$}. 
(For the case $m=1$ these may also be referred to as \emph{prefix maximum} queries.)
For a  2-sided query the range is~\mbox{$q=[1\cdots i] \times [1\cdots j]$}, where~\mbox{$1 \le i \le m$} and~$1 \le j \le n$; for a 3-sided query, ~\mbox{$q=[1\cdots i] \times [j_1\cdots j_2]$}, where~\mbox{$1 \le i \le m$} and~$1 \le j_1 \le j_2 \le n$
and for a 4-sided query, the query range is~\mbox{$q=[i_1\cdots i_2] \times [j_1\cdots j_2]$}, where~\mbox{$1 \le i_1 \le i_2 \le m$} and~$1 \le j_1 \le j_2 \le n$.
In each case, the response to a query is 
the position of the maximum element in the
query range, i.e., $\mathrm{RMQ}(A,q) = \mathrm{argmax}_{(i,j)\in q} A[i,j]$.
If the number of sides is not specified we assume the query is 4-sided.

We focus on the space requirements for answering this query in the
\emph{encoding} model \cite{Brodal2010}, where the aim is to pre-process
$A$ and produce a representation of $A$ which allows 2D-RMQ queries to be answered
\emph{without} accessing $A$ any further.  We now briefly motivate this particular question.
Lossless data compression is concerned with the information content of data,
and how effectively to compress/decompress the data so that it uses space close to its
information content.  However, there has been an explosion of interest in operating \emph{directly} (without decompression) on
compressed data via \emph{succinct} or \emph{compressed} data structures 
\cite{DBLP:journals/siamcomp/GrossiV05,DBLP:journals/jacm/FerraginaM05}.  
In such situations, a fundamental 
issue that needs to be considered is the ``information content of the data \emph{structure},''
formalized as follows. 
Given a set of objects $S$, and a set of queries $Q$, consider the equivalence
class ${\cal C}$ on $S$ induced by $Q$, where two objects from $S$ are equivalent if they provide the same answer to all queries in~$Q$.  Whereas traditional succinct data structures
are focussed on storing a given $x \in S$ using at most  $\lceil \log |S|\rceil$
bits\footnote{All logarithms are to base 2 unless stated otherwise.} --- the \emph{entropy} of $S$, we consider 
the problem of storing $x$ in $\lceil \log |{\cal C}| \rceil$ 
bits --- the \emph{effective entropy} of $S$
with respect to $Q$ ---
while still answering queries from $Q$ correctly.  
In what follows, we will abbreviate 
``the {effective entropy} of $S$
with respect to $Q$'' as  ``the effective entropy of $Q$.''

Although this term is new, the question is not: 
a classical result, using \emph{Cartesian trees} \cite{Vuillemin1980},
shows that given an array $A$ with 
$n$ values from $\{1,\ldots,n\}$, only $2n - O(\log n)$ bits are required to answer 
1D-RMQ without access to $A$, as opposed to the $\Theta(n \log n)$ bits needed
to represent $A$ itself.  The low effective entropy of 1D-RMQ 
is useful in many applications, e.g. it is used to simulate access 
to LCP information in compressed suffix arrays (see e.g. \cite{Sadakane2007}). 
This has motivated much
research into data structures whose space usage is close to the 
$2n - O(\log n)$ lower bound and which can 
answer RMQ queries quickly
(see \cite{DBLP:journals/siamcomp/FischerH11} and references therein).  In addition
to being a natural generalization of the 1D-RMQ, 
the 2D-RMQ query is also a standard kind of range 
reporting query.

\paragraph{Previous Work.}
\label{sec:previous_work}
The 2D-RMQ problem, as stated here, was proposed by Amir et al.~\cite{Amir2007}. (The variant where elements are associated with a sparse set of points in 2D, introduced by \protect{\cite{Gabow1984}}, is fundamentally different and is not discussed further here.)
Building on work by Atallah and Yuan~\cite{Atallah2009a},
Brodal et al.  \cite{Brodal2010} 
gave a hybrid data structure that combined a compressed
``index'' of $O(N)$ bits along with the original array~$A$.
Queries were answered using the index along with
$O(1)$ accesses to~$A$. 
They showed that this is an optimal point on the
trade-off between the number of accesses and the memory used.
In contrast, Brodal et al.
refined Demaine et al.'s~\cite{Demaine2009} earlier lower bound to show that
the effective entropy of 2D RMQ is  
$\Omega(N \log m)$ bits, thus resolving in the negative Amir et al.'s open question regarding
the existence of an $O(N)$-bit encoding for the 2D-RMQ problem. 
Brodal et al. also gave an $O(N \min\{m, \log n\})$ bit
encoding of $A$.  Recalling that $m$ is the smaller of the two dimensions,
it is clear that Brodal et al.'s encoding is non-optimal unless
$m = n^{\Omega(1)}$.

\subsubsection{Our Results.} We primarily consider two cases of the above problem: (a) the \emph{random} case, where
the input $A$ comprises $N$ independent, uniform (real) random numbers from $[0,1)$, 
and (b) the case of \emph{fixed} $m$, where $A$ is worst-case, but 
$m$ is taken to be a (fairly small) constant.  
Random inputs are of interest in practical situations 
and provide insights into the lower bounds of \cite{Demaine2009,Brodal2010} ---
for instance, we show that the 2D-RMQ can be encoded in $O(N)$ expected bits as opposed to $\Omega(N \log m)$ bits for the worst case --- that could inform the design of \emph{adaptive}
data structures which could use significantly less space for practical inputs.  For the case of fixed $m$, we determine the precise constants in the effective
entropy for particular values
of $m$ --- applying the techniques of Brodal et al. directly yields significantly non-optimal lower and upper bounds.  These results use ideas that may be relevant to solving the asymptotic version of the problem.
The majority of our effort is directed towards determining the effective entropy and providing concrete encodings that match the effective entropy, 
but we also in some cases provide data structures that support range maximum
queries space- and time-efficiently on the RAM model
with logarithmic word size.

\paragraph{Effective Entropy on Random Inputs.} 
We first consider the 1D-RMQ problem for an array $A[1\cdots n]$
(i.e. $m=1$) and show
that, in contrast to the worst case lower bound of $2n - O(\log n)$ bits,
the expected effective entropy of RMQ is $\le cn$ bits for $c
\approx 1.736\ldots$,  where the precise 
value of $c$ equals $2\sum_{i=1}^{\infty} \frac{\log i}{(i+1)(i+2)}$.  
We also give another encoding that is more ``local'' and achieves
expected $cn + o(n)$ bits for $c \le 1.9183\ldots$.

In the 2D case, $A$ is an $m \times n$ array with $2 \le m \le n$. 
We show bounds on the expected effective entropy of RMQ as below:
\begin{center}
\begin{tabular}{|c|c|c|c|}
\hline
             1-sided             &       2-sided        &        3-sided    &           4-sided\\
\hline
        $\Theta((\log n)^2)$ bits  &  $\Theta((\log n)^2 \log m)$ bits & $\Theta(n (\log m)^2)$ bits  & $\Theta(nm)$ bits\\
\hline
\end{tabular}
\end{center}
The 2D bounds are considerably lower than the known worst-case 
bounds of $O(n \log m)$ for the 1-sided case,
$O(nm)$ for the 2-sided case, and known lower bounds of $\Omega(nm)$ and $\Omega(nm \log m)$ for
the 3-sided and 4-sided cases respectively. 
The above results also hold in the weaker model where we assume all
permutations of $A$ are equally likely.
We also give a data structure that supports 
(4-sided) RMQ queries in $O(1)$ time using expected $O(nm)$ bits of space.  


\paragraph{Effective Entropy for Small $m$.} Our results for the 2D RMQ 
problem (4-sided queries) with worst-case inputs are as follows:
\begin{enumerate}
\item We give an encoding based on ``merging'' Cartesian trees\footnote{This encoding has also been discovered by Brodal (personal communication).}.  While this
encoding uses $\Theta(nm^2)$ bits, the same as that of
Brodal et al. \cite{Brodal2010}, it has lower constant factors: e.g., 
it uses $5n - O(\log n)$ bits when $m=2$ rather than $7n - O(\log n)$ bits \cite{Brodal2010}. We also give a data structure for the case 
$m=2$ that uses $(5+\epsilon) n$ bits and
answers queries in $O(\frac{\log(1/\epsilon)}{\epsilon})$ time for any  $\epsilon > 0$.

\item We give a lower bound on the effective entropy 
based on ``merging'' Cartesian trees. This lower
bound is not aymptotically superior to the
lower bound of Brodal et al. \cite{Brodal2010}, but for all 
fixed $m < 2^{12}$ it gives a better lower bound than that of Brodal et al.
For example, we show that for $m = 2$, the effective entropy is $5n - O(\log n)$ bits,
\emph{exactly} matching the upper bound, but
the method of Brodal et al. yields only a lower bound of $n/2$ bits.

\item For the case $m=3$, we give an encoding that requires $(6 + \log 5)n - O(\log n) \approx 8.32n$
bits\footnote{All logarithms are to the base $2$.}.   
Brodal et al.'s approach requires 
$(12 + \log 5)n - O(\log n) \approx 14.32n$ bits 
 and the method in (1) above would require about $9n$ bits.  
Our lower bound from (2) is $8n - O(\log n)$ for this case.
\end{enumerate}
The paper is organized as follows: in Section~\ref{sec:random} we give bounds on the expected entropy for random inputs, in Section~\ref{sec:smallm} we consider the case of small $m$ and Section~\ref{sec:datastructures} gives the new data structures. 

\section{Random Input}
\label{sec:random}

In this section we consider the case of ``random'' inputs, where the array
$A$ is populated with $N = m \cdot n$ independent uniform random real
numbers in the range $[0,1)$.  We first consider the 1D-RMQ problem 
giving two encodings, one optimal but less convenient to decode
(Theorem~\ref{thm:2sided1D})  and 
another that is less compact, but easier to decode 
(Theorem~\ref{thm:2sided1Db}).  We then consider 
the 2D cases, beginning with 1-sided queries (Theorem~\ref{thm:1sided2D}),
2-sided (Theorem~\ref{thm:2sided2D}) and finally 4- and 3-sided
queries (Theorem~\ref{thm:4sided2D}).

\subsection{1D RMQ problems.}
For the 1D case, we begin by outlining the Cartesian tree 
\cite{Vuillemin1980}.
Given an array $A$ containing $n$ distinct
numbers, its Cartesian tree is 
an \textit{unlabeled} $n$-node 
binary tree in which each node corresponds to a unique position
in the array, and is defined recursively as follows: the root of
the tree corresponds to position $i$, where $A[i]$ is the maximum
element in $A$, and the left and right subtrees of the root are
the Cartesian trees for the sub-arrays $A[1 \dots i-1]$ and
$A[i+1 \dots n]$ respectively (the Cartesian tree of a null array
is the empty binary tree).  A Cartesian tree for an array $A$ can
be used to answer 1D-RMQ on $A$ via a lowest common ancestor
query on the Cartesian tree.  We first show:

\begin{theorem}
\label{thm:2sided1D}
The expected effective entropy of 2-sided queries on a 1D array of size $n$ is at most $cn + O(\log n)$ bits where $c \approx 1.736 \ldots$ has the exact value
\begin{align*}
c & = 2 \sum_{k=1}^{\infty} \frac{\log k}{(k+1)(k+2)}
\end{align*}
\end{theorem}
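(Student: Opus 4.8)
The plan is to reduce the problem to encoding the Cartesian tree of $A$, to identify its distribution under random input as that of a random binary search tree, and then to compute the Shannon entropy of that distribution in closed form. Recall from the preceding discussion that two arrays answer all range-maximum queries identically precisely when they have the same Cartesian tree (the root records the position of the maximum, and a general range query is an LCA query). Hence the effective entropy of 2-sided $1$D queries equals the number of bits needed to store the Cartesian tree, and the expected effective entropy is bounded above by the expected length of any prefix-free encoding of this tree. So it suffices to exhibit such an encoding with expected length $cn + O(\log n)$.

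Next I would pin down the distribution. Since the $A[i]$ are i.i.d.\ continuous, all $n!$ relative orders are equally likely, so the Cartesian tree is distributed exactly as the shape of a random binary search tree. The key structural fact is that each shape $T$ then occurs with probability $\Pr[T] = \prod_v 1/s(v)$, where the product ranges over nodes $v$ and $s(v)$ is the size of the subtree rooted at $v$; equivalently, conditioned on a subtree having size $s$, the size of its left subtree is uniform in $\{0,\dots,s-1\}$. This identity makes the entropy computation exact: writing $N_k$ for the number of nodes whose subtree has size exactly $k$, we get
\[
H_n \;=\; E\!\left[\,\log \tfrac{1}{\Pr[T]}\,\right] \;=\; E\!\left[\sum_v \log s(v)\right] \;=\; \sum_{k=2}^{n} (\log k)\, E[N_k],
\]
the size-$1$ subtrees dropping out since $\log 1 = 0$. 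I would encode the tree with an arithmetic coder that, in a fixed top-down traversal, emits at each node its left-subtree size against the uniform model on $s(v)$ outcomes; this realizes expected length $H_n + O(1)$, and prepending a self-delimiting description of $n$ costs a further $O(\log n)$ bits. It thus remains only to show $H_n = cn + O(\log n)$.

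The computation of $E[N_k]$ is the heart of the matter. A contiguous block of $k$ array positions forms a subtree exactly when both of its (at most two) array-neighbours exceed every element of the block. For an interior block this happens with probability $\tfrac{2}{(k+1)(k+2)}$ — among the $k+2$ relevant elements, the two largest must be the neighbours — and there are $n-k-1$ such blocks; the two end blocks each contribute probability $\tfrac{1}{k+1}$. Summing and simplifying, the corrections cancel neatly to give $E[N_k] = \tfrac{2(n+1)}{(k+1)(k+2)}$ for $1 \le k \le n-1$, with $E[N_n]=1$. Substituting,
\[
H_n \;=\; 2(n+1)\sum_{k=2}^{n-1}\frac{\log k}{(k+1)(k+2)} \;+\; \log n \;=\; (n+1)\,c \;-\; 2(n+1)\!\sum_{k\ge n}\frac{\log k}{(k+1)(k+2)} \;+\; \log n,
\]
and since the tail of the convergent series is $O\!\big(\tfrac{\log n}{n}\big)$, everything collapses to $cn + O(\log n)$.

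The main obstacle is the error-term bookkeeping in this last step: verifying that the end-block corrections and the replacement of the finite partial sum by the full series each contribute only $O(\log n)$ rather than $\Theta(n)$, which is exactly why the clean form $E[N_k]=\tfrac{2(n+1)}{(k+1)(k+2)}$ is worth establishing. The remaining care is to confirm that the arithmetic coder meets the entropy up to an additive constant; the factorization $\Pr[T]=\prod_v 1/s(v)$ into conditionally uniform splits is what guarantees this and simultaneously makes $H_n$ computable exactly.
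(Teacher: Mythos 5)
Your proof is correct, and it reaches the paper's bound by a partly different route. The reduction is shared: like the paper, you encode the Cartesian tree with per-node split information so that the code length for a tree $T$ is, up to additive constants, $\log(1/\Pr[T]) = \sum_v \log s_v$ --- the paper packs the pre-order sequence of relative offsets $o_v$ into a single mixed-radix integer of exactly $\lceil \sum_v \log s_v \rceil$ bits, while your arithmetic coder over conditionally uniform splits is an equivalent realization costing only $O(1)$ extra expected bits (and your decoder is valid for the same reason the paper's is: the radix/model at each node is determined by already-decoded sizes). Where you genuinely diverge is in evaluating $E[\sum_v \log s_v]$: the paper sets up the recurrence $S(n) = \log n + \frac{2}{n}\sum_{i=0}^{n-1} S(i)$, observes that it is the entropy recurrence for random binary search trees, and quotes the exact solution from prior work (Kieffer et al.); you instead compute $E[N_k]$ directly --- a block of $k$ consecutive positions forms a subtree iff its at most two neighbours beat the entire block, with probability $\frac{2}{(k+1)(k+2)}$ interior and $\frac{1}{k+1}$ at the ends --- and your simplification checks out: $(n-k-1)\frac{2}{(k+1)(k+2)} + \frac{2}{k+1} = \frac{2(n+1)}{(k+1)(k+2)}$, which upon summing reproduces the paper's cited closed form $S(n) = \log n + 2(n+1)\sum_{k=1}^{n-1}\frac{\log k}{(k+1)(k+2)}$ exactly (the $k=1$ term vanishes since $\log 1 = 0$), and the tail estimate $\sum_{k \ge n}\frac{\log k}{(k+1)(k+2)} = O((\log n)/n)$ correctly yields $cn + O(\log n)$. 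What each approach buys: the paper's route is shorter by citation and situates the constant within the random-BST entropy literature; yours is fully self-contained, derives the exact formula from scratch via linearity of expectation, and your explicit use of the identity $\Pr[T] = \prod_v 1/s_v$ makes transparent why the encoding is within $O(1)$ bits of the Shannon entropy of the induced distribution on query-equivalence classes --- that is, why this upper bound on the expected effective entropy is essentially tight, a point the paper leaves implicit.
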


\begin{proof} 
We first give an encoding of a Cartesian tree.
In what follows, we number the nodes of the Cartesian tree
in in-order, so that node $v$ corresponds to $A[v]$.  Each
node $v$ is the root of a subtree of size $s_v \ge 1$ (which 
represents a sub-array of $A$ of size $s_v$). The
\emph{relative offset} $o_v$ of $v$ is an integer
in the range $0..s_v - 1$ representing its relative
position in this sub-array (if $v$ has a left child $x$,
then $o_v = s_x$, otherwise $o_v = 0$).  
The encoding is obtained by visiting the nodes of the Cartesian tree
in pre-order and writing down the values $o_v$ in the order they
are encountered.  It is not hard to see that the Cartesian tree
can be uniqely decoded from this encoding. If 
$root$ denotes the root of the Cartesian tree,  the first value
in the sequence, which is $o_{root}$, gives the
sizes of the left and (since $n$ is known) right subtrees: we
obtain the sub-sequences corresponding to the subtrees and
recurse.  

A naive implementation of this encoding
requires $\Theta(n \log n)$ bits since each
relative offset potentially needs $\Theta(\log n)$ bits;
even encoding each relative offset in $\lceil \log o_v \rceil$ bits
may cause this encoding to exceed the known upper bound of $2n - O(1)$ bits.
For a more space-efficient encoding, define the
\emph{weight} of a node $v$, $w_v$, as the product of
$s_x$ for all nodes $x$ in $v$'s subtree, including
$v$ itself.  The sequence of values $o_v$, written in pre-order, is viewed
as a mixed-radix integer, where the radix of $o_v$ is 
$s_v$, and the root and last node in pre-order are taken
to be the {least} and most significant digit respectively.
The value of the resulting
integer is clearly in the range $0..w_{root} - 1$.  This
integer is computed as follows.  We traverse the Cartesian
tree in pre-order. Suppose that $v$ is a node with left
child $l$ and right child $r$. Before returning from $v$
to $v$'s parent, having completed the
traversal of $l$ ($r$) of
a node $v$, and encoded the subtrees rooted at $l$
($r$) as integers $e_l$ ($e_r$), we encode $v$ as
the integer $e_v = e_r w_l s_v + e_l s_v +  o_v$ in the
range $0 .. w_v -1$.  If the final encoding is
$e_r$, then ${{e_r}\bmod{n}} = o_r$; the rest of
the decoding can be done essentially by converting
the recursive decoding algorithm described in the previous
paragraph to an iterative one. Observe
that the size of the encoding is $\lceil \log w_{root} \rceil  = 
\left \lceil \sum_v \log s_v \right \rceil$ bits; we now bound the 
latter quantity.

For every $i$ such that $0 \leq i \leq n-1$, an array of
random independent numbers $A$ will have its maximum element at
position $i+1$ with probability $1/n$. Consequently, the
associated distribution of Cartesian trees will witness a tree
with exactly $i$ nodes in its left subtree and $n-1-i$ nodes in
its right subtree with probability $1/n$; the root will
correspond to position $i+1$.  
Let $S(n)$ denote the expected value of
$\sum_v \log s_v$ for a
random array $A$ of size $n$, and note that
$\lceil S(n) \rceil$ is the expected size of the encoding.
Taking $S(0) = 0$, then for all $n \geq 1$:
\begin{align}
S(n)  & = {\log n} + \frac{1}{n} \sum_{i=0}^{n-1} (S(i)
                        + S(n-1-i))  \notag\\
 & =  {\log n} + \frac{2}{n} \sum_{i=0}^{n-1} S(i) \label{eq-entropy}
\end{align}
In fact, Equation (\ref{eq-entropy}) is \textit{identical} to the 
recurrence for the entropy of $n$-node
\textit{random binary search trees} \cite{Kieffer2009},
viz. counting the expected number of bits
needed to describe a binary search tree produced by a random
permutation of $n$ distinct numbers. It has
the \textit{exact} solution $S(0)=0$ and 
\[ S(n) = {\log n} + 2(n+1)\sum_{i=1}^{n-1} \frac{\log
  i}{(i+1)(i+2)} \]
for $n \geq 1$.  The result follows.
\qed
\end{proof}
Although not a primary concern of this section, it should
be noted that the above encoding appears to require
$O(n^2)$ time to decode. A less compact encoding, which is more
``local'' and is linear-time decodable is given below:
\begin{theorem}
\label{thm:2sided1Db}
There is a linear-time decodable encoding of 1D RMQ that uses 
$cn + o(n)$ bits for $c = 1.9183\ldots < 1.92$.
\end{theorem}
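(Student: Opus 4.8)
The plan is to abandon the globally packed mixed-radix integer of Theorem~\ref{thm:2sided1D} in favour of a \emph{memoryless} code applied symbol-by-symbol to a purely local description of the Cartesian tree, trading a slightly larger constant for linear-time decodability. Concretely, I would again build the Cartesian tree of $A$, but describe it by the sequence, in pre-order, of node \emph{types}: for each node I record which of its children are present, i.e.\ one of the four symbols $\{\textrm{leaf},\textrm{left-only},\textrm{right-only},\textrm{both}\}$ (equivalently the two bits ``has left child'', ``has right child''). A binary tree is uniquely determined by this pre-order type sequence and can be rebuilt from it left-to-right in $O(n)$ time with a stack; an RMQ is then answered by an LCA query on the rebuilt tree. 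So the only thing to analyse is how cheaply the type sequence can be stored.

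The first step is a combinatorial computation of the expected number of nodes of each type in a random Cartesian tree (equivalently a random BST) of size $n$. Using the same ``root splits uniformly'' recurrence as in Theorem~\ref{thm:2sided1D}, the expected number of leaves $L_n$ satisfies $L_n=\frac{2}{n}\sum_{i=0}^{n-1}L_i$ with $L_1=1$, whose solution is $L_n=(n+1)/3$ for $n\ge 2$. The binary-tree identity $(\#\textrm{two-child})=(\#\textrm{leaf})-1$ then gives $(n-2)/3$ two-child nodes, hence $(n+1)/3$ one-child nodes, split evenly into $(n+1)/6$ left-only and $(n+1)/6$ right-only by the left--right symmetry of the distribution. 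Thus in the limit the type frequencies converge to $\bigl(\tfrac13,\tfrac16,\tfrac16,\tfrac13\bigr)$.

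Next I would encode the type sequence with the fixed memoryless model $q=\bigl(\tfrac13,\tfrac16,\tfrac16,\tfrac13\bigr)$, e.g.\ by arithmetic coding (or, to make linear-time decoding by table lookup completely transparent, by cutting the sequence into blocks of $\Theta(\log n/\log\log n)$ symbols and entropy-coding each block). The expected length equals $\sum_{\textrm{types }t}(-\log q_t)\,E[N_t]+O(\log n)$; plugging in the counts above collapses to
\begin{align*}
\tfrac{n}{3}\bigl(2\log 3+\log 6\bigr)+O(\log n)=\Bigl(\log 3+\tfrac13\Bigr)n+O(\log n),
\end{align*}
so $c=\log 3+\tfrac13=1.9183\ldots<1.92$, with the rounding/termination overhead absorbed into $o(n)$. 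Decoding is linear time because each symbol (or block) is produced in $O(1)$ word operations and the tree is rebuilt on the fly.

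The main obstacle I expect is not the coding step, which is routine source coding, but pinning down the expected type counts cleanly and then verifying that the memoryless model actually delivers the stated constant with only lower-order slack: one must check that using the true marginal $q$ makes the expected code length equal the marginal (cross-)entropy $\log 3+\tfrac13$ per node (rather than the smaller true tree entropy $\approx 1.736$ of Theorem~\ref{thm:2sided1D}, which a memoryless code \emph{cannot} reach), and that both the coder's redundancy and the deviation of the exact counts $(n\pm O(1))/3$ from $n/3$ stay inside the $o(n)$ term. It is worth confirming that this is genuinely below $2n$: the saving comes entirely from the positive correlation between the two child-presence bits (a leaf, with both absent, occurs with probability $\tfrac13>\tfrac14$), so the two bits must be coded \emph{jointly} rather than independently.
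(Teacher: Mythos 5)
Your proposal is correct and takes essentially the same route as the paper: encode the pre-order sequence of node types (leaf, left-only, right-only, two-child) by arithmetic coding under the fixed memoryless model $\bigl(\tfrac13,\tfrac16,\tfrac16,\tfrac13\bigr)$, yielding $c=\log 3+\tfrac13\approx 1.9183$ with linear-time stack-based decoding. The only (immaterial) difference is how the expected type counts are obtained --- you solve the leaf-count recurrence and use the identity $\#\textrm{two-child}=\#\textrm{leaf}-1$ plus symmetry, while the paper observes directly that the type of node $i$ is determined by the relative order of $A[i-1],A[i],A[i+1]$, giving per-node probabilities $\tfrac13,\tfrac16,\tfrac16,\tfrac13$ by linearity of expectation.
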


\begin{proof}
We study the distribution of different kinds of nodes in the Cartesian tree 
of a random array. 
Each node in a Cartesian tree can be of four types -- it can have 
two children (type-2), 
only a left or right child (type-L/type-R), 
or it can be a leaf (type-0). 
Consider an element $A[i]$ for $1 \le i \le n$ 
and observe that the type of the $i$-th node in the 
Cartesian tree in inorder (which corresponds to $A[i]$) 
is determined by the relative values of $l = A[i-1]$,
$m = A[i]$ and $r = A[i+1]$ (adding dummy random
elements in $A[0]$ and $A[n+1]$). Specifically:
\begin{enumerate}
\item  if $r > m > l$ then node $i$ is type-L;
\item  if $l > m > r$ then node $i$ is type-R;
\item  if $l > r > m$ or $r > l > m$ then node $i$ is type-0 and
\item  if $m > l > r$ or $m > r > l$ then node $i$ is type-2.
\end{enumerate}
In a random array, the probabilities of the alternatives above are
clearly $1/6, 1/6$, $1/3$ and $1/3$.  By linearity of expectation, if
$N_x$ is the random variable that denotes the number of type-$x$ nodes,
we have that $\mbox{\rm E}[N_0] = \mbox{\rm E}[N_2] = n/3$ and
$\mbox{\rm E}[N_L] = \mbox{\rm E}[N_R] = n/6$.  The encoding consists 
in traversing the Cartesian tree in either level-order or in
depth-first order (pre-order) and writing down the label of each node in
the order it is visited: it is known that this suffices to reconstruct
the Cartesian tree \cite{Jacobson89,BenoitDMRRR05}.  The sequence of labels
is encoded using arithmetic coding, choosing the probability of 
type-0 and type-2 to be 1/3 and that of type-L and type-R to be 1/6.
The coded output would be of size $\log 6 (N_R + N_L) +
\log 3 (N_0 + N_2)$ (to within lower-order terms)
\cite{DBLP:reference/algo/HowardV08}; plugging in the expected
values of the random variables $N_x$ gives the result.  It is easy
to see how to decode the tree from this encoding in linear time.
\qed
\end{proof}

\subsection{2D RMQ problems.}

We now consider the 2D case.

\begin{theorem}
\label{thm:1sided2D}
The expected effective entropy of 1-sided queries on an $m \times n$ array
is $\Theta(\log^2 n)$ bits. 
\end{theorem}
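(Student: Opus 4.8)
The plan is to reduce query-answering to a compact \emph{record structure}, then bound its expected encoding length for the upper bound and its Shannon entropy for the matching lower bound. Write $c_j = \max_{1 \le i \le m} A[i,j]$ for the maximum entry of column $j$. The answer to the $1$-sided query with parameter $j$ is the position of $\max_{j' \le j} c_{j'}$, and as $j$ grows this position changes only at those columns that are left-to-right maxima of the sequence $c_1, \dots, c_n$; call these the \emph{record columns} $r_1 < \dots < r_t$. Hence the equivalence class of $A$ under $1$-sided queries is captured exactly by the pairs $(r_k, \rho_k)$, where $\rho_k \in \{1,\dots,m\}$ is the row attaining the maximum in column $r_k$: to answer query $j$ one returns $(\rho_k, r_k)$ for the largest $r_k \le j$.

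For the upper bound, observe that the $c_j$ are i.i.d.\ continuous random variables, so their relative order is a uniformly random permutation and the expected number of left-to-right maxima is $\Theta(\log n)$; thus $\mathrm{E}[t] = \Theta(\log n)$. I would store $t$, followed by the $t$ record columns using $O(\log n)$ bits each and the $t$ rows using $O(\log m) = O(\log n)$ bits each, in a self-delimiting manner. Since $\log n$ is deterministic, the expected total length is $O(\mathrm{E}[t] \cdot \log n) = O((\log n)^2)$.

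For the lower bound, the expected effective entropy is at least the Shannon entropy of the equivalence class, and the set of record columns is a function of that class --- column $j$ is a record iff the answer to query $j$ lies in column $j$ --- so it suffices to lower-bound the entropy of the record set. Let $X_j$ be the indicator that $j$ is a record column. For a uniformly random permutation the $X_j$ are mutually independent with $\Pr[X_j = 1] = 1/j$, so the entropy of $(X_1,\dots,X_n)$ equals $\sum_{j=1}^n h(1/j)$, where $h$ is the binary entropy function. Using $h(1/j) = \frac{\log j}{j} + \Theta(1/j)$ together with $\sum_{j=1}^n \frac{\log j}{j} = \Theta((\log n)^2)$, this sum is $\Theta((\log n)^2)$, matching the upper bound.

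The steps needing care are: (i) verifying that the equivalence class is \emph{exactly} the record structure in both directions, so that the reduction is lossless and the lower bound is legitimate; (ii) the independence of the record indicators $X_j$ --- the classical fact that the left-to-right maxima of a random permutation are independent $\mathrm{Bernoulli}(1/j)$ events --- which is what collapses the entropy into a clean additive sum; and (iii) the asymptotic estimate $\sum_{j=1}^n h(1/j) = \Theta((\log n)^2)$, whose dominant contribution comes from the $\frac{\log j}{j}$ term. I expect (ii) to be the conceptual crux: it is precisely this independence, rather than the smaller, $\Theta(\log m \cdot \log n)$ contribution of the row information, that forces the $(\log n)^2$ term and makes the bound hold uniformly for every $m$.
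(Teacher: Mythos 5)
Your proposal is correct and takes essentially the same route as the paper: your upper bound stores exactly the paper's encoding (the prefix maxima, i.e.\ your record pairs $(r_k,\rho_k)$, each in $O(\log n)$ bits, with expected count $\Theta(\log n)$), and your lower bound, like the paper's, reduces to the Shannon entropy of an independent source determined by the query answers. The only inessential difference is in the choice of source: you use the binary record indicators $X_j$, independent $\mathrm{Bernoulli}(1/j)$, yielding $\sum_{j=1}^n h(1/j) = \Theta(\log^2 n)$, whereas the paper's source also emits the row index of each new prefix maximum (an $(m+1)$-ary source), whose extra row information contributes only a $\Theta(\log m \log n) = O(\log^2 n)$ term and is not needed for the bound.
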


\begin{proof}
For the upper bound observe that we can recover the answers to the 1-sided
queries by storing the positions of the prefix maxima, i.e., those positions,
$(i,j)$, such that the value stored at $(i,j)$ 
is the maximum among those in positions
$[1 \cdots m] \times [1 \cdots j]$. 
Since the position $(i,j)$ is a prefix maximum with
probability $1/jm$ and can be stored using 
$\lceil \log (nm+1) \rceil$ bits, the expected number
of bits used is at most 
$\sum_{i=1}^m \sum_{j=1}^n (\lceil \log (nm+1) \rceil)/jm 
= O(\log^2 n)$ bits. 

Consider a random source that generates  $n$ elements of 
$\{ 0, 1, \ldots, m \}$ as follows: 
the
$i$th element of the source is $j$ if the answer to the query
$[1 \cdots m] \times [1 \cdots i]$ is $(j,i)$ for some $j$, and 0
otherwise. Clearly the entropy of this source is a lower
bound on the expected size of the encoding. This source produces $n$
independent elements of 
$\{ 0, 1, \ldots, m \}$ with the $i$th  equal to $j$, $j = 1, \ldots, m$,
 with probability $1/im$ and is equal to 0 with probability $1 - 1/i$.
I.e., its entropy is 
$\sum_{i=1}^n [ (1-1/i) \log (\frac{i+1}{i}) + \sum_{j=1}^m
\log (im) /im] = \Omega( \log^2 n)$ 
bits.
\qed
\end{proof}
\begin{theorem}
\label{thm:2sided2D}
The expected effective entropy of 2-sided queries on an $m \times n$ array
is $\Theta(\log^2 n \log m)$ bits. 
\end{theorem}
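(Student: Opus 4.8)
The plan is to prove matching upper and lower bounds. Call a position $(a,b)$ a \emph{dominance prefix maximum} if $A[a,b]$ is the largest entry in the prefix rectangle $[1\cdots a]\times[1\cdots b]$. First I would observe that the answer to a 2-sided query $[1\cdots i]\times[1\cdots j]$ is always a dominance prefix maximum lying inside the query rectangle: if $(a,b)$ is the maximum of $[1\cdots i]\times[1\cdots j]$ then, since $[1\cdots a]\times[1\cdots b]$ is contained in the query range, $(a,b)$ is in particular the maximum of its own prefix rectangle. Moreover, among all dominance prefix maxima inside the query rectangle, the true answer is precisely the one of largest value. Hence it suffices to encode (i) the positions of all dominance prefix maxima and (ii) their total order by value; a query is answered by returning the highest-ranked stored position that lies in the query rectangle.

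For the upper bound I would bound the expected size of this encoding. A position $(a,b)$ is a dominance prefix maximum with probability $1/(ab)$, so the expected number of them is $\sum_{a=1}^m\sum_{b=1}^n 1/(ab) = H_m H_n = \Theta(\log m\,\log n)$. Writing down each position costs $\lceil\log(mn+1)\rceil = O(\log n)$ bits, and recording the relative order of $k$ elements costs $O(k\log k)$ bits, which is of lower order. Thus the expected encoding length is $O(\log m\,\log n)\cdot O(\log n) = O(\log^2 n\,\log m)$, as required.

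For the lower bound, the effective entropy equals $H(X)$, where $X$ records the answers to all 2-sided queries (equivalently, the positions and relative order of all dominance prefix maxima). Assume w.l.o.g.\ that $m$ is a power of two and, for $0\le a\le \log m$, let $W_a$ be the restriction of $X$ to queries with row-bound $i\le 2^a$; equivalently $W_a$ is the 2-sided answer function of the top $2^a$ rows of $A$. Since $W_{a-1}$ is literally a sub-function of $W_a$ and $W_{\log m}=X$, the chain rule gives $H(X)=H(W_0)+\sum_{a=1}^{\log m} H(W_a\mid W_{a-1})$, where $H(W_0)=\Theta(\log^2 n)$ is the $m=1$ case of Theorem~\ref{thm:1sided2D}. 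The proof then reduces to showing that each increment satisfies $H(W_a\mid W_{a-1})=\Omega(\log^2 n)$ with a constant independent of $a$; summing the $\Theta(\log m)$ terms yields $\Omega(\log^2 n\,\log m)$.

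The main obstacle is this per-scale conditional lower bound. The intuition is exchangeability: the top $2^{a-1}$ rows and the fresh strip of rows $2^{a-1}+1,\dots,2^a$ are i.i.d., so the fresh strip should contribute ``its fair share.'' To make this rigorous I would note that $W_{a-1}$ is a function of the old rows' values $V_{\mathrm{old}}$, so conditioning on the finer $V_{\mathrm{old}}$ only decreases entropy; hence $H(W_a\mid W_{a-1})\ge \mbox{\rm E}_{V_{\mathrm{old}}}[H(W_a\mid V_{\mathrm{old}})]$, and it suffices to bound the latter below by $\Omega(\log^2 n)$. With $V_{\mathrm{old}}$ fixed, the strip is still uniform and independent, and $W_a$ reveals exactly those strip positions whose value exceeds the now-fixed prefix-maxima thresholds $M^{\mathrm{old}}_j$ of the old rows; the revealed positions are precisely the $\Theta(\log n)$ dominance prefix maxima of the combined array lying in the strip, and the entropy of their columns is $\Omega(\log^2 n)$, exactly as in the $1$-sided computation of Theorem~\ref{thm:1sided2D}. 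The delicate step I expect to require the most care is controlling the atypical old configurations (for instance, when a few very large values early in the old rows push every $M^{\mathrm{old}}_j$ near the global maximum, suppressing the strip's contribution); I would handle this by restricting to the constant-probability event that the two strips are ``balanced'' and establishing the entropy bound on that event alone.
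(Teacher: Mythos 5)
Your upper bound is correct and is essentially the paper's own argument: store the positions of the 2-sided (dominance) prefix maxima sorted by value, observe that the answer to $[1\cdots i]\times[1\cdots j]$ is the largest stored position inside the query, and bound the expected cost by $\sum_{i=1}^m\sum_{j=1}^n O(\log n)/(ij)=O(\log^2 n\log m)$. Nothing to add there.

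The lower bound is where you depart from the paper, and where your proposal has a genuine gap. The paper's lower bound is a short direct computation: from any encoding one can read off, for every $(i,j)$, the indicator bit of the event ``the answer to $[1\cdots i]\times[1\cdots j]$ is the corner $(i,j)$ itself,'' which occurs with probability $1/(ij)$, and the entropy of this derived source is bounded below by $\sum_{i=1}^m\sum_{j=1}^n \log(ij)/(ij)=\Omega(\log^2 n\log m)$ --- no induction over row scales at all. (As an aside, the paper's assertion that these $mn$ bits are \emph{independent} is not literally exact --- e.g.\ the bits at $(1,2)$ and $(2,1)$ are positively correlated through $A[1,1]$ --- so even the direct route needs a small repair, such as restricting to a subfamily of corners for which independence or near-independence can be verified; but it remains a few lines.) Your alternative --- the chain rule over dyadic row prefixes $W_0,\dots,W_{\log m}$ --- is structurally sound: the inequality $H(W_a\mid W_{a-1})\ge H(W_a\mid V_{\mathrm{old}})$ is valid because $W_{a-1}$ is a function of $V_{\mathrm{old}}$, and restricting to a constant-probability set of old configurations is legitimate since conditional entropy is an average of nonnegative terms. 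But this decomposition concentrates \emph{all} of the difficulty into the per-scale claim $H(W_a\mid W_{a-1})=\Omega(\log^2 n)$ uniformly in $a$, and you do not prove it. To make it work you would need, conditioned on typical old values whose column-prefix thresholds $P_j$ satisfy $1-P_j=\Theta(1/(2^{a-1}j))$: (i) a further chain rule inside $W_a$ over dyadic column ranges; (ii) a proof that each range $(j/2,j]$ produces a strip record with probability bounded below by a constant \emph{uniform in $a$ and $j$}, with its column carrying $\Omega(\log j)$ conditional bits given everything revealed at coarser scales (the strip records at different column scales interact, exactly as in the unconditional record process, so this is not automatic); and (iii) the handling of atypical thresholds that you yourself flag. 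The sentence ``I would handle this by restricting to the constant-probability event that the two strips are balanced'' is a plan, not an argument, so as written the lower bound half of the theorem is not established; the paper's derived-source computation reaches the same $\Omega(\log^2 n\log m)$ without any of this machinery.
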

\begin{proof}
As in the proof of Theorem \ref{thm:1sided2D}, we store a list of the positions
of the 2-sided prefix maxima sorted by their values. By 2-sided prefix
maxima we mean those positions $(i,j)$ where the value in
that position is maximum among all those in $[1 \cdots i] \times [1 \cdots j]$.
The answer to any query is the position of the largest such 2-sided 
prefix maximum 
inside the query. This can be determined from the sorted list of positions.
The expected number of bits in the encoding is at most
$\sum_{i=1}^m \sum_{j=1}^n \lceil \log (nm+1) \rceil /ij = O(\log^2 n \log m)$ 
bits. 

The lower bound is also similar. From an encoding for 2-sided queries
of an $m \times n$ array,  we can create a 
source of $nm$ independent bits with a bit being 1 if and only if the
answer to the query $[1 \cdots i] \times [1 \cdots j]$ is $(i,j)$ 
(which occurs with probability $1/ij$).
The entropy of this source is at least 
$\sum_{i=1}^m \sum_{j=1}^n \log(ij)/ij = \Omega(\log^2 n \log m)$ bits.
\qed
\end{proof}

\begin{theorem}
\label{thm:4sided2D}
The expected effective entropies of 4-sided and 3-sided queries on an $m \times n$ array
are $\Theta(n m)$ bits and $\Theta(n (\log m)^2)$ respectively. 
\end{theorem}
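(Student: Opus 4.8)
The plan is to prove each of the two claims by matching lower and upper bounds, obtaining the lower bounds by restricting the query set to \emph{degenerate} ranges and invoking the one-dimensional analyses already in hand, and the upper bounds by explicit encodings whose expected length I bound using the scarcity of records in a random sequence. In both cases the lower bound is the easy direction and the upper bound is where the work lies.

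For the \emph{4-sided} case I would first establish $\Omega(nm)$. The observation is that a 4-sided query with $i_1=i_2=i$ degenerates to a one-dimensional RMQ over row $i$, so the 4-sided equivalence class determines the Cartesian tree of every row. Since the rows of $A$ have independent entries, these $m$ Cartesian trees are independent random variables, and the entropy of the equivalence class (the expected effective entropy) is at least the sum of their entropies. Each such entropy is exactly the quantity $S(n)=cn+O(\log n)=\Theta(n)$ from Theorem~\ref{thm:2sided1D}, so the total is $\Theta(nm)$. For the matching upper bound the dominant cost is simply storing these $m$ per-row Cartesian trees, which is already $\Theta(nm)$ expected bits; it then remains to show that the \emph{cross-row} information needed to answer queries spanning several rows costs only $O(nm)$ as well. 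Here I would start from the worst-case ``merging Cartesian trees'' encoding (which, like Brodal et al., spends $\Theta(\log m)$ bits per cell across the levels of a balanced merge over the rows, for $\Theta(nm\log m)$ total) and replace its fixed-width merge fields by arithmetic coding: for a random array the comparison recorded at a given cell on a given level is almost always forced, so its contribution to the entropy is small, and I would argue that summing these contributions over all $O(\log m)$ levels and all $N=nm$ cells totals $O(nm)$, removing the worst-case $\log m$ factor.

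For the \emph{3-sided} case the lower bound is analogous but uses the \emph{columns}. A 3-sided query with $j_1=j_2=j$ degenerates to $[1\cdots i]\times\{j\}$, and letting $i$ run over $1,\ldots,m$ recovers the prefix-maximum (record) structure of column $j$: the position of the maximum of $A[1\cdots i,j]$ jumps to $i$ exactly when $i$ is a top-down record of the column. In a random column of length $m$ the event ``$i$ is a record'' has probability $1/i$ and these events are mutually independent, so the record structure has entropy $\sum_{i=1}^{m}H(1/i)=\Theta((\log m)^2)$, where $H(p)=-p\log p-(1-p)\log(1-p)$ and the dominant contribution is $\sum_i (\log i)/i$. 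As the $n$ columns are independent and all their record structures are determined by the 3-sided equivalence class, the effective entropy is $\Omega(n(\log m)^2)$. For the upper bound I would encode, for each column, its record structure at a cost equal to this entropy, $\Theta(n(\log m)^2)$ in total, which already answers every single-column query. I then use the fact that the answer to \emph{any} 3-sided query sits at a column record, so it suffices to encode enough ``horizontal'' order information among the $\Theta(n\log m)$ record positions to compare them across columns at each prefix level $i$, and to show that this additional information is also $O(n(\log m)^2)$.

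I expect the upper bounds to be the main obstacle. In both cases the difficulty is to bound the cross-boundary information --- the merge bits between horizontal strips for 4-sided, and the horizontal comparisons among column records for 3-sided --- \emph{in expectation}, and to combine it with the per-row (resp.\ per-column) structure without double counting. The crucial probabilistic lever that defeats the worst-case $\log m$ (resp.\ $m$) factor is the same in both settings: the number of records, or left-to-right maxima, along any line of a random array is only $\Theta(\log \cdot)$ in expectation rather than linear. Converting this scarcity into an arithmetic-coded encoding of \emph{provably} $O(nm)$ and $O(n(\log m)^2)$ expected length, respectively, is the real content of the argument.
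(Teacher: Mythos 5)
Your lower bounds are sound and essentially match the paper's: for the 3-sided case the paper uses exactly your argument (the $n$ columns are independent prefix-maxima problems, each worth $\Omega((\log m)^2)$ expected bits via Theorem~\ref{thm:1sided2D}), and for the 4-sided case it extracts $\lceil n/2\rceil m$ independent fair bits from comparisons of adjacent pairs of cells, which is a simpler variant of your row-Cartesian-tree entropy argument. But the upper bounds --- which you yourself flag as ``the real content of the argument'' --- are not proved, and the plans you sketch have concrete defects. For the 4-sided case, the claim that the merge comparisons in a balanced merge over rows are ``almost always forced'' on random input is false: when two strips of equal height are merged, the comparison between their column-range maxima is a fair coin flip by symmetry, so these bits carry full entropy and arithmetic coding saves nothing. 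In fact the entropy of the merge bits was never the obstacle: a balanced binary hierarchy performs $m-1$ merges of $n$ bits each, i.e.\ $O(nm)$ bits already. The obstacle is that such a hierarchy answers only queries whose row range is a node of the merge tree; an arbitrary range $[i_1\cdots i_2]$ decomposes into $O(\log m)$ canonical strips whose maxima (restricted to an arbitrary column interval) must then be compared, and your proposal supplies no encoding of this cross-strip information --- this is precisely why Brodal et al.\ store Cartesian trees for all $\Theta(m^2)$ row ranges. Similarly for the 3-sided case: you correctly observe that every answer is a column record and that records are scarce, but the step ``encode enough horizontal order information among the records \ldots\ and show that this is $O(n(\log m)^2)$'' is the entire theorem, left unproved; done naively one needs such order information at each of the $m$ prefix levels.

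The paper's upper-bound device is genuinely different and sidesteps these difficulties. For each position $(i,j)$ it encodes that position's \emph{dominance region}: the maximal region such that $(i,j)$ answers every query containing $(i,j)$ and lying inside the region, delimited by the staircase of positions $(k,l)$ for which $A[k,l]$ is the largest and $A[i,j]$ the second largest in the rectangle they span. A query is answered by the unique position whose region contains the query rectangle, so no cross-strip or cross-level comparisons are ever needed. The probabilistic lever you identify is then applied locally: a rectangle with side lengths $a$ and $b$ delimits with probability $\frac{1}{ab(ab-1)}$, and charging $O(\log a+\log b)$ bits per stored delimiter gives expected $O(1)$ bits per position, hence $O(nm)$ total in the 4-sided case; the 3-sided variant (regions empty for non-records, with $O(\log m)$-bit row fields and self-delimited column offsets) gives expected $O((\log m)^2)$ per column. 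So the scarcity of records is exploited not by compressing merge bits, but by making the expected description length of each cell's region constant --- a local argument that avoids the arbitrary-row-range problem entirely, and it is this construction that your proposal is missing.
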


\begin{proof}
We begin with the 4-sided case\footnote{NB: %
positions are given as row index then column index, not $x$-$y$ coordinates.}.
For each position $(i,j)$ we store a region which has the property
that for any query containing $(i,j)$ and 
lying entirely within that region, $(i,j)$ is the
answer to that query. This contiguous region  is
delimited by a monotone (along columns or rows) sequence of positions in
each of the quadrants defined by $(i,j)$. A position $(k,l)$ delimits the 
boundary of the region of $(i,j)$ if the value in position $(k,l)$ is
the largest and the 
value in position $(i,j)$ is the second largest, 
in the sub-array defined by $(i,j)$ and $(k,l)$, i.e.,
any query in this sub-array not including positions on row $k$ or column $l$
is answered with $(i,j)$ (dealing with boundary conditions appropriately).
The answer to any query is the (unique) position inside the query whose region 
entirely contains the query.
For each position, we store a clockwise ordered list of the positions delimiting
the region (starting with the position above $i$ in column $j$)
by giving the position's column and row offset from $(i,j)$.

\begin{figure}
\begin{center}
\includegraphics[height=0.5in]{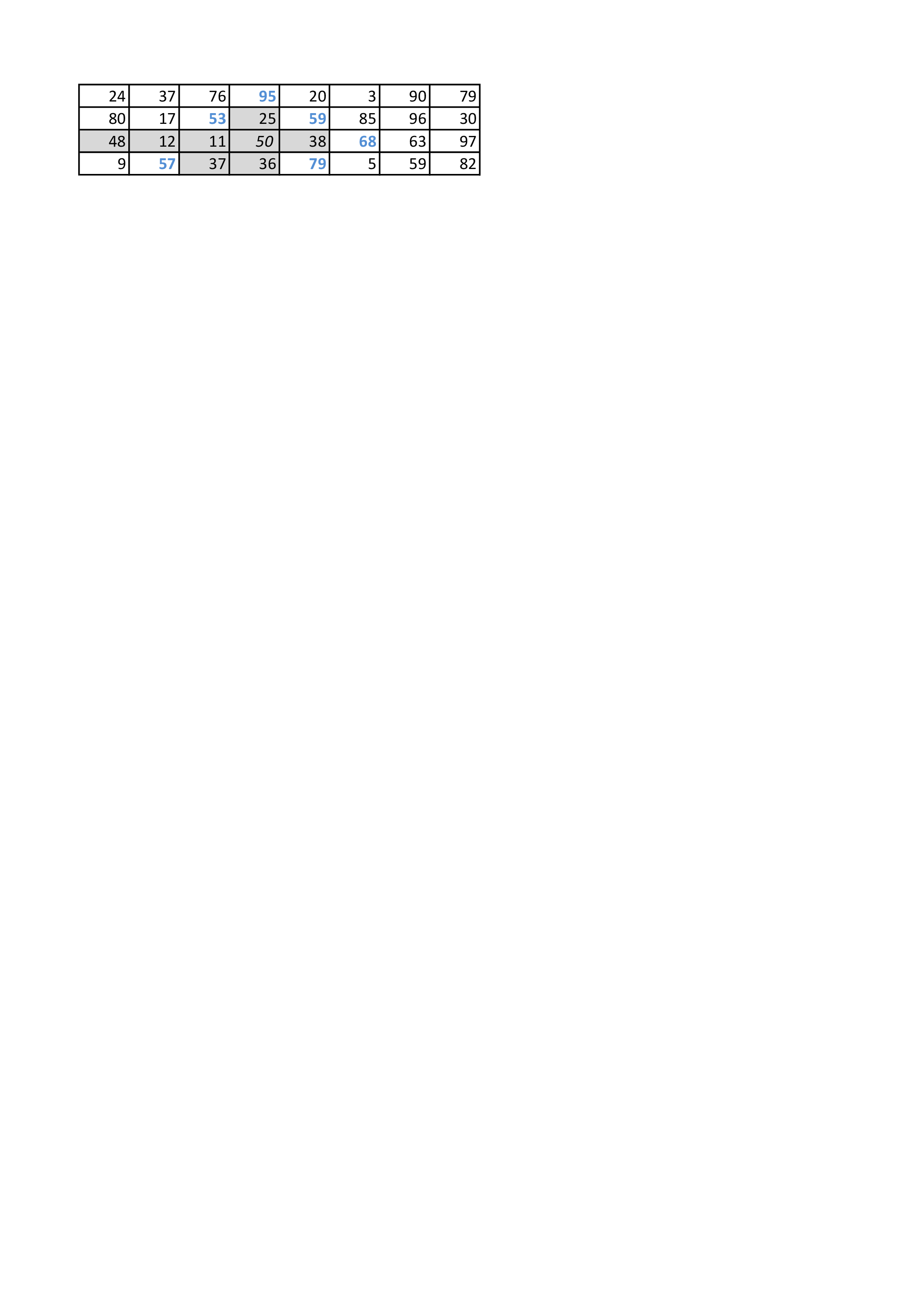}~~\includegraphics[height=0.5in]{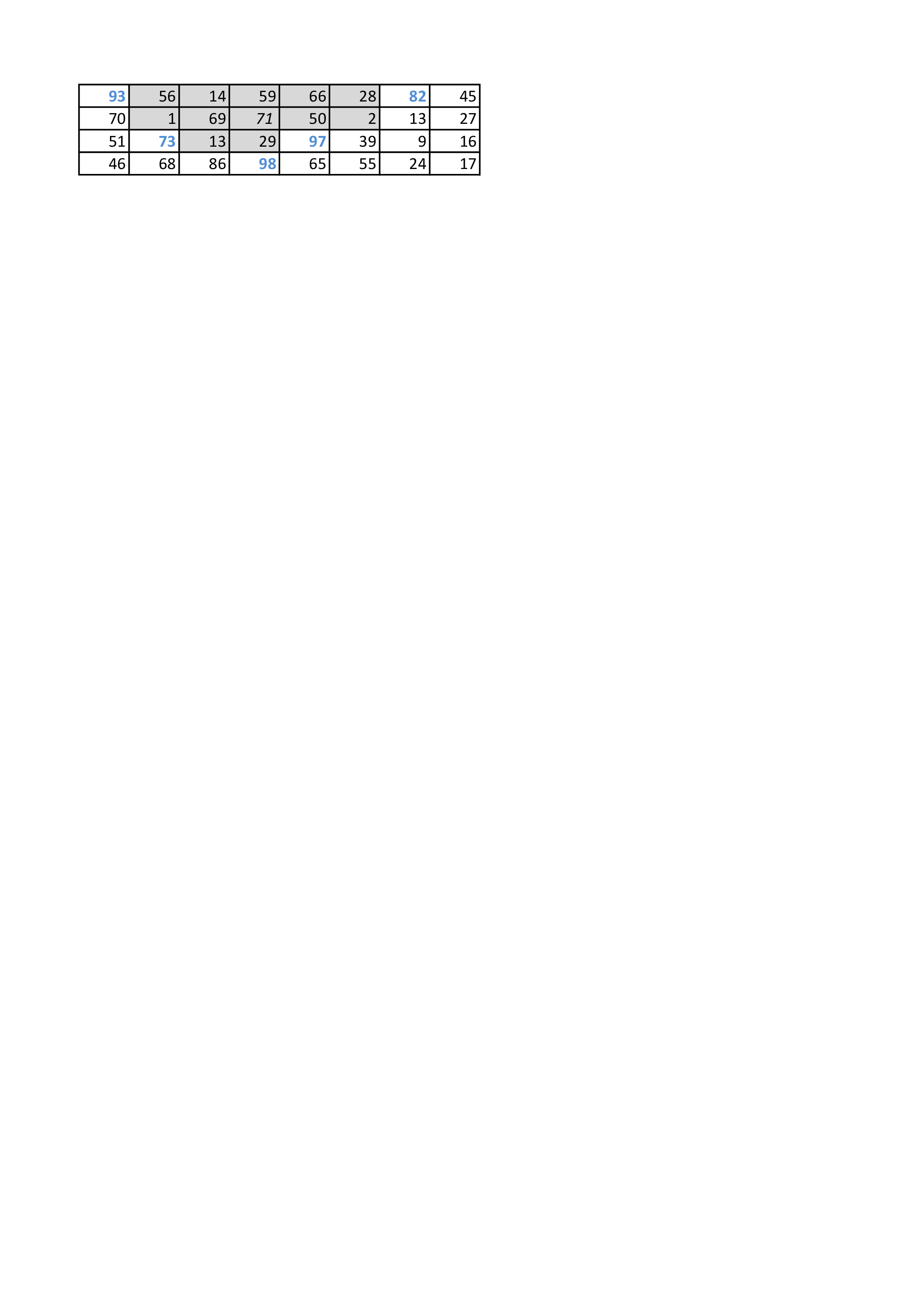}
\end{center}
\caption{Regions for the italicised values for 4-sided queries (left) and 3-sided queries (right), with region delimiters in blue.}
\label{fig:delimit}
\end{figure}

The expected number of bits 
required to store any region is at most 4 times
the expected number bits required to store the region of $(1,1)$, whose
boundary runs diagonally from the first column to the first row. A position $(i,j)$
delimits the boundary of $(1,1)$ ($i=1, \ldots, m+1$, $j=1, \ldots, n+1$ 
excluding the case $i=j=1$) with probability 
$1/ij(ij-1)$ (if it contains the largest
value and $(1,1)$ is the second largest in the sub-array $[1 \cdots i] \times
[1 \cdots j]$) and its offsets require at most 
$2 (\lceil \log(i+1) \rceil + \lceil \log(j+1) \rceil)$ 
bits to store. I.e., the expected number of bits stored per position is
at most 
$$
4 \cdot 
\left(
\sum_{i=2}^{m+1} 
\frac{2 \lceil \log (i+1) \rceil}{i(i-1)} + 
\sum_{j=2}^{n+1} \sum_{i=1}^{m+1} 
\frac{2 (\lceil \log(i+1) \rceil + \lceil \log(j+1) \rceil)}{ij(ij-1)} 
\right)
= O(1).
$$
By linearity of expectation, the expected number of bits stored is $O(nm)$. 
The bound is tight as we can generate $\lceil n/2 \rceil m$ equiprobable independent
random bits (of entropy $\Omega(nm)$) from $A$ by reporting 1 iff the 
answer to  the query consisting of the two positions $(2i-1,j)$ and $(2i,j)$ 
is $(2i,j)$, for $i = 1, \ldots, \lceil n/2 \rceil$, 
$j = 1, \ldots, m$. 

For the 3-sided case, recall that we focus on queries that are open to the ``top'' side.
We again define the region of 
position $(i,j)$ as the area such that $(i,j)$ is the answer to
any query containing $(i,j)$ and lying entirely within that area.  
In contrast to the 4-sided case, the region of a point 
may be empty (if it is not a prefix maximum in its column).  For points
with non-empty regions, their region is delimited on the right by 
a monotone sequence of positions $(k,l)$ such that $l > j$ for
all positions, and $k > i$ for all positions but one (see Fig.~\ref{fig:delimit}).  
The left delimiters  are symmetric, and the region is 
obviouly delimited from below by the next prefix maximum in the $j$-th column.
To answer RMQs, we store all (ordered) pairs $(p,q)$ such that 
position $q$ delimits the region of position $p$ (assuming $p$ has a non-empty region).
The pairs are stored sorted by $p$'s column, and are represented as follows (all numbers are
assumed stored in a self-delimiting manner, e.g. using Gamma codes). We store
$p$ and $q$'s row number using $O(\log m)$ bits and the difference
between  the columns of $p$ and $q$ using $O(1 + \log (|j - l| + 1))$ bits.

The pair $(p,q)$ is stored iff the value in position $q = (k,l)$ is the largest and the 
value in position $p = (i,j)$ is the second largest
in the sub-array defined by $p$ and $q$, i.e.  
$A[1\cdots \max\{i,k\}][j \cdots l]$ (assuming
 that $l > j$).  For a fixed pair $(p,q), p\ne q$, the probability that $(p,q)$ is stored
is $\frac{1}{(ab)(ab-1)} \le 2/(ab)^2$, where $a = \max\{i,k\}$ and $b = |j-l|+1$.  
We now calculate the expected cost of storing $(p,q)$ over all pairs $(p,q)$, taking
$j = 1$ to simplify the summation (arbitrary $j$ will have a summation at most
twice that of $j=1$).  
\begin{eqnarray*}
2 \cdot \sum_{i = 1}^m \sum_{l = 1}^n 
\sum_{k = 1}^m \frac{1}{\max\{i,k\}^2 l^2} \cdot O(\log m + \log l) 
& = &  O \left ( \sum_{i = 1}^m \sum_{l = 1}^n \frac{(\log m + \log l)}{i l^2} \right )\\
~=~ O \left ( \sum_{i = 1}^m (\log m) / i \right )
& = &  O((\log m)^2).
\end{eqnarray*}
The summation uses the observation that, for any fixed $i$,
$\sum_{k = 1}^m \frac{1}{(\max\{k,i\})^2} =  \sum_{k = 1}^i \frac{1}{i^2} +
\sum_{k = i+1}^m \frac{1}{k^2} = O(1/i)$.
%
%
Summing over all $j$, we get that the expected effective entropy is $O(n (\log m)^2)$.
For the lower bound, the $n$ columns can be considered independent $1 \times m$ prefix maxima
problems each requiring expected $\Omega( \log^2 m )$ bits by 
Theorem \ref{thm:1sided2D}.\qed
\end{proof}

\section{Small $m$}
\label{sec:smallm}

Brodal et al. \cite{Brodal2010} gave a 2D-RMQ encoding of size essentially
$\left ( \frac{m(m+3)}{2} - O(\log m) \right ) 
\cdot 2n \approx n \cdot m(m+3)$ bits for a $m \times n$ array. In order that
precise comparisons can be made for fixed values of $m$, we outline their approach.
For each of the $m$ rows of the matrix, they store a Cartesian tree for
that row, and for each of the $(m)(m-1)/2$ possible subranges of rows,
they store a Cartesian tree for for the maximum value in each column that lies within
that set of rows.  Since we consider $m$ fixed in this section, the space bound 
for these Cartesian trees is $((m)(m+1)/2)(2n - O(\log n))$ bits which is essentially
$2n \cdot (m)(m+1)$ bits.  Given any query spanning a subrange of rows, the 
Cartesian tree for that subrange tells us which column the range maximum lies in.
However, to find which row the maximum lies, in Brodal et al. also store a Cartesian
tree for each column of the matrix.  The space used by these column-wise Cartesian trees needs to
be calculated more carefully since $m$ is small, and is taken to be
$\left \lceil n \cdot \log \left ( \frac{1}{m+1} {{2m} \choose {m}} \right ) \right \rceil$ (we do not take the
ceiling of the log since the Cartesian trees for all columns could be encoded together). Specifically
this gives:
\begin{center}
\begin{tabular}{|c|c|c|c|}
\hline $m$ & row-wise CT & column-wise CT & Total \\
\hline 
     2   &  $6n - O(\log n)$  & $n$  & $7n - O(\log n)$ \\
\hline
     3   &  $12n - O(\log n)$ & $n \cdot \log 5$  & $\approx 14.32n$ \\
\hline
     4   &  $20n - O(\log n)$ & $n \cdot \log 14$ & $\approx 23.81n$\\
\hline
\end{tabular}
\end{center}
Furthermore, Brodal et al. showed that the effective entropy of 2D-RMQ is
at least $\log ( (\frac{m}{2}!)^{\lfloor \frac{n-m/2+1}{2} \rfloor} )$ bits.
For $m=2$, their techniques give a lower bound of $n/2$, but this is worse than
the obvious lower bound of $4n - O(\log n)$ obtained by considering each row
independently.

In this section we improve upon these results for small $m$. Our main tool
is the following lemma:
\begin{lemma}
\label{lem:merging}
Let $A$ be an arbitrary $m \times n$ array, $m\ge 2$. 
Given an encoding capable of answering
range maximum queries  of the 
form $[1 \cdots (m-1)] \times [j_1 \cdots j_2]$ ($1 \leq j_1 \leq j_2 \leq n$) 
and an encoding answering 
range maximum queries on the last row of $A$, 
$n$ additional bits are necessary and sufficient to
construct an encoding answering queries of the form
$[1 \cdots m] \times [j_1 \cdots j_2]$ ($1\leq j_1 \leq j_2 \leq n$) on $A$. 
\end{lemma}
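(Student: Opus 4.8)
The plan is to reduce the lemma to a two-row merging problem and then to supply the extra information recursively, one bit per node of a suitable Cartesian tree.

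\emph{Reduction.} For each column $j$ write $T[j]=\max_{1\le i\le m-1}A[i][j]$ for the column maximum over the top $m-1$ rows, and $B[j]=A[m][j]$ for the last-row value. Any query $[1\cdots m]\times[j_1\cdots j_2]$ is answered by the \emph{better} of the top answer (returned by the given top-rows encoding, call it $E_1$) and the last-row answer (returned by the given last-row encoding $E_2$): if $\mathrm{argmax}_{[j_1,j_2]}T$ beats $\mathrm{argmax}_{[j_1,j_2]}B$ we report $E_1$'s position, otherwise $E_2$'s. Thus the only missing information is, for every query, which of the two range maxima of $T$ and $B$ is larger; equivalently, I must be able to reconstruct the Cartesian tree of the column-maximum array $C[j]=\max(T[j],B[j])$ together with, at each column, a one-bit label saying whether that maximum came from the top or from the bottom.

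\emph{Sufficiency (the crux).} I would store exactly one comparison bit per node of the Cartesian tree of $C$, produced by a recursion starting from $[1,n]$: for the current range $[a,b]$ I obtain $c_T=\mathrm{argmax}_{[a,b]}T$ and $c_B=\mathrm{argmax}_{[a,b]}B$ from $E_1$ and $E_2$, store a single bit recording whether $T[c_T]>B[c_B]$, let the winning column $c\in\{c_T,c_B\}$ be the root of this subtree, and recurse on $[a,c-1]$ and $[c+1,b]$. Distinct recursive calls have distinct roots and there are $n$ columns, so exactly $n$ bits are written. Decoding replays the identical recursion: at each step $c_T,c_B$ are recomputed from $E_1,E_2$ and the stored bit selects the root, so the whole Cartesian tree of $C$, with its top/bottom labels, is rebuilt; thereafter each query is answered by a range maximum in $C$, one label lookup, and a single call to $E_1$ or $E_2$ for the exact position. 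The main obstacle is precisely this step: the tempting scheme that stores, per column $j$, only whether $B[j]>T[j]$ is \emph{insufficient}, because when the two candidate columns satisfy $c_T\ne c_B$ and each value dominates its own row in its own column, the comparison $T[c_T]$ versus $B[c_B]$ is not determined by those two bits. Ordering the $n$ bits along the recursion removes this ambiguity, since each bit is exactly the comparison needed at the moment it is read.

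\emph{Necessity.} For the matching lower bound I would exhibit $2^n$ arrays that share the \emph{same} $E_1$ and $E_2$ yet differ pairwise on some query, forcing at least $n$ extra bits. Fix rows $1,\dots,m-2$ to hold small constant values, and in each column $j=1,\dots,n$ place the pair $\{2j,\,2j+\tfrac12\}$ into rows $m-1$ and $m$, choosing independently in every column which of the two rows receives the larger value. Both row $m-1$ and row $m$ are then strictly increasing, so their Cartesian trees, and hence all answers of $E_1$ and $E_2$, are fixed across the entire family (with row $m-1$ dominating the rows above it, so that $E_1$'s answer is always the right endpoint in row $m-1$). At the same time the single-column query $[1\cdots m]\times[j\cdots j]$ returns row $m-1$ or row $m$ exactly according to the choice made in column $j$. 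The $2^n$ choice-vectors therefore yield $2^n$ distinguishable query behaviors on top of identical $E_1,E_2$, so $n$ additional bits are necessary, completing the plan.
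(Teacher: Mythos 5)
Your upper bound coincides with the paper's: the same top-down recursion that builds the joint Cartesian tree of the columnwise maxima, storing one comparison bit per node (hence exactly $n$ bits), decoding by replaying the recursion against the two given encodings, and answering a query from the range maximum in the joint tree, its top/bottom label, and one call to the appropriate sub-encoding. Your lower bound, however, takes a genuinely different route. The paper reduces to $m=2$, fixes \emph{arbitrary} Cartesian trees $T$ and $B$ for the two rows, and shows that \emph{all} $2^n$ comparison strings arising in the merge are realizable: it associates a range with each comparison, establishes two structural properties of these ranges, and proves that the digraph consisting of the tree edges plus the enforced inequalities is acyclic for every bit string. You instead exhibit a single concrete family: both critical rows are strictly increasing (column $j$ holds $\{2j,\,2j+\tfrac12\}$, allocated freely between rows $m-1$ and $m$, which keeps both rows increasing since $2j+\tfrac12<2(j+1)$), so every query answerable by $E_1$ or $E_2$ has a fixed answer across the family, while the single-column queries $[1\cdots m]\times[j\cdots j]$ read off the $n$ free choices; this gives $2^n$ distinguishable behaviors atop identical sub-encodings, forcing $n$ additional bits. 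Your construction is valid (after making the ``small constant'' filler rows distinct, which the paper's distinctness convention or tiny perturbations handle) and is considerably simpler than the paper's acyclicity argument.

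The trade-off is worth noting. The paper's proof deliberately establishes a stronger fact --- flagged in the proof itself as ``different from (and stronger than)'' an existence-style claim --- namely that the $n$ merge bits form a component \emph{independent} of $T$ and $B$, for every choice of $T$ and $B$. That independence is exactly what allows Theorem~\ref{thm:smallm_lowerbnd} and Corollary~\ref{cor:m_equals_2} to \emph{add} the $n$ merge bits to the $2n-O(\log n)$ bits of entropy contributed by each row, yielding the tight $5n-O(\log n)$ bound for $m=2$. Your family collapses both rows to a single (rightmost-path) Cartesian tree shape, so while it proves the lemma as literally stated, it cannot be multiplied against the row entropies and would not support the paper's downstream induction; restoring the ``for all $T,B$'' quantifier is precisely where the paper's DAG argument does its work.
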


\begin{proof}
The proof has two parts, one showing sufficiency (upper bound) and the
other necessity (lower bound).  

\paragraph{Upper Bound.} We construct a \emph{joint} Cartesian tree that can be used
in answering queries of the form $[1 \cdots m] \times [j_1 \cdots j_2]$ for 
$1 \leq j_1 \leq j_2 \leq n$, using an additional $n$ bits.  The root of the 
joint Cartesian tree is either the answer to the query  $[1 \cdots (m-1)] \times
[1 \cdots n]$ or $[m] \times [1 \cdots n]$. We store a single bit indicating the larger
of these two values. We now recurse on the portions of the array to the left and right
of the column with the maximum storing a single bit, which indicates which sub-problem
the winner comes from, at each level of
the recursion. Following this procedure, using the $n$ additional bits it
created, we can construct the joint Cartesian tree.  To answer queries of
the form $[1 \cdots m] \times [i \cdots j]$, the lowest common
ancestor $x$ (in the joint Cartesian tree) of $i$ and $j$ gives us the
column in which the maximum lies.  However, the comparison that placed 
$x$ at the root of its subtree
also tells us if the maximum lies in the $m$-th row or in rows $1\cdots m-1$; in
the latter case, query the given data structure for rows $1\cdots m-1$.

\paragraph{Lower Bound.}
For simplicity we consider the case $m=2$ --- it is easy to see that
by considering the maxima of the first $m-1$ elements of each column the general
problem can be reduced to that of an array with two rows. 
Let the elements of the top and bottom rows be $t_1, t_2, \ldots , t_n$ 
and $b_1, b_2, \ldots, b_n$.  Given two arbitrary Cartesian 
trees $T$ and $B$ that describe the answers to the top and bottom rows, the  
procedure described in the upper bound for constructing the Cartesian 
tree for the $2 \times n$ array from $T$ and $B$ makes exactly $n$ 
comparisons between some $t_i$ and $b_j$.  
Let $c_1,\ldots,c_n$ be a bit string that describes the outcomes of these
comparisons in the order which they are made. We now show how to
assign values to the top and bottom rows that are consistent with any given
$T$, $B$, and comparison string $c_1,\ldots,c_n$. Notice this is different from
(and stronger than) the trivial observation that there exists a $2\times n$
array $A$ such that merging $T$ and $B$ must use $n$ comparisons: we show
that $T$, $B$ and the $n$ bits to merge the two rows are independent
components of the $2\times n$ problem.

If the $i$-th comparison compares the
maximum in $t_{l_i},\ldots,t_{r_i}$ with the maximum in $b_{l_i},\ldots,b_{r_i}$, 
say that the {range} $[l_i,r_i]$ is \emph{associated} with the
$i$-th comparison. The following properties of ranges follow 
directly from the algorithm for constructing the Cartesian
tree for both rows:
\begin{itemize}
\item[(a)] for a fixed $T$ and $B$, $[l_i,r_i]$ is uniquely
determined by $c_1,\ldots,c_{i-1}$; 
\item[(b)] if $j > i$ then the range associated with $j$ is either 
contained in the range associated with $i$, or is disjoint from the
range associated with $i$.
\end{itemize}

By (a), given distinct bit strings $c_1,\ldots,c_n$ 
and $c'_1,\ldots,c'_n$ that differ for the first time in 
position $i$, the $i$-th comparison would be associated with
the same interval $[l_i,r_i]$ in both cases, and the query $[1..2]\times[l_i..r_i]$ 
then gives different answers for the two bit strings.  Thus, each bitstring
gives distinguishable $2\times n$ arrays, and we now show that each bitstring
gives valid $2\times n$ arrays.

First note that if the $i$-th bit in a given bit string 
$c_1,\ldots,c_n$ is associated with the interval $[l,r]$,
then it enforces the condition that $t_j > b_k$, where
$j = \mathrm{argmax}_{i \in [l, r]} \{t_i\}$ and 
$k = \mathrm{argmax}_{i \in [l,  r]} \{b_i\}$
or vice-versa. Construct a digraph $G$ with vertex set 
$\{t_1,\ldots,t_n\} \cup \{b_1,\ldots,b_n\}$ which contains
all edges in $T$ and $B$, as well as edges for conditions $t_j > b_k$ (or vice
versa) enforced by the bit string. All arcs are directed from the larger value 
to the smaller. 
We show that $G$ is a DAG and therefore there is a partial order of 
the elements satisfying $T$ and $B$ as well as the constraints enforced by
the bit string.

Suppose that for some value of $c_1,\ldots,c_n$, $G$
is not a DAG. 
Pick any cycle in $G$: 
there must be some node $t \in \{t_1,\ldots,t_n\}$ that
is explicitly enforced (i.e. by a comparison) to be greater than some 
$b \in \{b_1,\ldots,b_n\}$, such that some descendant $b'$ of
$b$ in $B$ has been explicitly enforced to be greater 
than an ancestor $t'$ of $t$ (or the symmetric case with $T$ and $B$ 
interchanged must hold). Let the interval associated with
the $b$-$t$ comparison be $[l,r]$.  First consider the case that $b = b'$. 
Since an element that wins a comparison is never compared again, 
the comparison between $b$ and $t'$ must have occurred after the
comparison with $t$, in which case the interval associated with the
$b$-$t'$ comparison is a sub-interval of $[l,r]$ by property (b). 
This means that $t'$ must be a descendant of $t$.  Therefore $b\ne b'$ and
$b'$ must be a proper descendant of $b$. If $b'$ belongs to
$[l,r]$, it will never have been compared prior to the $b$-$t$ 
comparison, and will subsequently only be compared (if at all)
to a descendant of $t$.  If $b'$ does not belong to $[l,r]$, then
there must have been a comparison between $b$, or one of $b$'s ancestors
in $B$, that was won by a proper ancestor $t''$ of $t$, such that the range
$[l'',r'']$ associated with that comparison was split into two parts,
one containing $[l,r]$ and one containing $b$.  Clearly, $b$ could
not have been compared prior to this comparison, and subsequently,
can only be compared to elements from $T$ that are in a different
subtree of $T$ than $t$.
\qed{}\end{proof}
Using Lemma \ref{lem:merging} we show by induction:

\begin{theorem}
\label{thm:smallm_upperbnd}
There exists an encoding solving the 2D-RMQ problem on 
a $m \times n$ array requiring at most 
$n \cdot \frac{m(m+3)}{2}$ bits.
\end{theorem}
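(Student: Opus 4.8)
The plan is to prove Theorem~\ref{thm:smallm_upperbnd} by induction on $m$, using Lemma~\ref{lem:merging} as the inductive step. The quantity $\frac{m(m+3)}{2}$ is exactly what we need: observe that the increment when going from $m-1$ to $m$ rows is $\frac{m(m+3)}{2} - \frac{(m-1)(m+2)}{2} = \frac{m^2+3m - m^2 - m + 2}{2} = \frac{2m+2}{2}=m+1$. So the inductive step must produce an encoding for the $m$-row problem from an encoding for the $(m-1)$-row problem using only $m+1$ additional bits \emph{per column}, i.e. $(m+1)n$ additional bits total. I would set up the induction so that this increment matches precisely.

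**Base case.** For $m=1$, the 2D-RMQ problem degenerates to the 1D-RMQ problem on a single row, which the Cartesian tree of that row answers using $2n - O(\log n) \le 2n = \frac{1\cdot 4}{2}\,n$ bits. This establishes the base of the induction within the stated bound.

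**Inductive step.** Assume an encoding of size at most $\frac{(m-1)(m+2)}{2}\,n$ bits answers queries of the form $[1\cdots(m-1)]\times[j_1\cdots j_2]$. I would then invoke Lemma~\ref{lem:merging} as follows. First, store a Cartesian tree for the last (the $m$-th) row of $A$; this is an encoding answering 1D-RMQ on that row and costs $2n - O(\log n) \le 2n$ bits. Second, apply the sufficiency (upper-bound) direction of Lemma~\ref{lem:merging}, which says that given an encoding for rows $1\cdots(m-1)$ and an encoding for the last row, exactly $n$ additional bits suffice to build an encoding answering the merged queries $[1\cdots m]\times[j_1\cdots j_2]$. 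Summing the three contributions gives a total of
\begin{align*}
\frac{(m-1)(m+2)}{2}\,n + 2n + n
&= \frac{(m-1)(m+2)}{2}\,n + 3n \\
&= \frac{(m-1)(m+2) + 6}{2}\,n
 = \frac{m^2 + m + 4}{2}\,n,
\end{align*}
which slightly exceeds the target $\frac{m(m+3)}{2}\,n = \frac{m^2+3m}{2}\,n$ whenever $m \ge 2$. This discrepancy is the crux: the naive accounting spends $2n$ on the new row plus $n$ to merge, i.e.\ $3n$, but the bound only allows $m+1$ per step, so for small $m$ the increment $3$ is larger than $m+1=3$ exactly at $m=2$ and smaller thereafter. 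The resolution is to carry the $O(\log n)$ savings from each Cartesian tree explicitly, so that the per-row cost of the fresh Cartesian tree is $2n - O(\log n)$ rather than $2n$, and to verify the arithmetic telescopes correctly; summing $\sum_{k=2}^m (k+1) = \frac{m(m+3)}{2}-2$ new-row-plus-merge costs on top of the $m=1$ base of $2n$ recovers the claimed $\frac{m(m+3)}{2}\,n$ bound.

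**The main obstacle** will be getting the arithmetic of the induction to close at exactly the stated constant, which forces me to be careful about whether the merging step reuses the existing $(m-1)$-row encoding \emph{in place} (so that only $n$ fresh bits are spent per step) rather than re-storing it. The clean way to phrase the induction is: the encoding consists of one Cartesian tree per row (total $m\cdot(2n-O(\log n))$ bits) plus one $n$-bit merge string for each of the $m-1$ successive merges (total $(m-1)n$ bits), giving $2mn + (m-1)n - O(m\log n) = (3m-1)n - O(m\log n)$ — but this does \emph{not} match $\frac{m(m+3)}{2}n$ for $m\ge 3$, which is the real subtlety. In fact $\frac{m(m+3)}{2}$ grows quadratically while $3m-1$ grows linearly, so the intended proof must \emph{not} store all rows independently; rather each merge of the incoming row against the already-merged structure for rows $1\cdots(m-1)$ costs $n$ bits, and the recursive structure stores, at level $k$, the Cartesian trees for all $\binom{}{}$-many row-subranges. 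I would therefore reconcile the recurrence with the Brodal et al.\ accounting of storing a Cartesian tree for every prefix-of-rows subrange, so that the $\frac{m(m+3)}{2}$ count emerges as $\sum_{k=1}^{m}(2k - O(\log n)/n)$-style summation, and confirm that Lemma~\ref{lem:merging} supplies each incremental $n$-bit layer exactly once.
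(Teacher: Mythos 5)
Your proposal has the right skeleton (induction on $m$ with Lemma~\ref{lem:merging} as the engine, and the correct target increment of $(m+1)n$ bits per added row), but your inductive step is incorrect and your closing paragraph shows you did not find the missing idea. A single application of Lemma~\ref{lem:merging} yields an encoding for queries of the form $[1\cdots m]\times[j_1\cdots j_2]$ only, i.e.\ full-height queries. The theorem concerns 4-sided queries, so the $m$-row encoding must also answer queries with row ranges $[i\cdots m]$ for every $2\le i\le m-1$, and your construction (inductive encoding plus one Cartesian tree plus one $n$-bit merge) cannot answer those. This, not bookkeeping, is why your arithmetic refuses to close; note also that your claim that $\frac{m^2+m+4}{2}n$ ``slightly exceeds the target $\frac{m(m+3)}{2}n$ whenever $m\ge2$'' is backwards --- the two are equal at $m=2$ and your figure is strictly \emph{below} the target for $m\ge3$, precisely because the encoding you are costing is incomplete. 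The gap is likewise not resolved by carrying the $O(\log n)$ savings or by ``reusing the encoding in place.''

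The paper's inductive step applies Lemma~\ref{lem:merging} $m-1$ times, not once: for each $i=1,\ldots,m-1$ it builds a joint Cartesian tree answering queries $[i\cdots m]\times[j_1\cdots j_2]$, merging the new row's Cartesian tree ($2n$ bits, which also serves as the encoding for the row range $[m\cdots m]$) with the inductive encoding's answers for $[i\cdots(m-1)]\times[j_1\cdots j_2]$ --- legitimate 4-sided queries on the first $m-1$ rows, so the $(m-1)$-row encoding supplies them at no extra cost. Each merge costs $n$ bits by the lemma, so the total increment is $2n+(m-1)n=(m+1)n$, and the induction closes: $\frac{(m-1)(m+2)}{2}n+(m+1)n=\frac{m(m+3)}{2}n$. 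Your final paragraph gestures at this (``Cartesian trees for all row-subranges,'' the Brodal et al.\ accounting) but never commits to a concrete step, and your alternative accounting of one Cartesian tree per row plus $m-1$ merge strings, totalling $(3m-1)n$ bits, again describes an encoding that answers only some row ranges; indeed $(3m-1)n-O(m\log n)$ is the paper's \emph{lower} bound (Theorem~\ref{thm:smallm_lowerbnd}), not a construction. Incidentally, tracking the $-O(\log n)$ terms is unnecessary: the theorem's bound is exactly $n\cdot\frac{m(m+3)}{2}$ with nothing subtracted, so bounding each Cartesian tree by $2n$ bits suffices.
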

\begin{proof}
The theorem follows by induction from Lemma \ref{lem:merging} and the fact that
$2n$ bits are sufficient to store a Cartesian tree of a $1 \times n$ array (the base case). Given an
encoding solving the RMQ problem for a $(m-1) \times n$ array (using 
$(m-1)(m+2)n/2$ bits by induction) and a Cartesian tree
for a 1D array (using $2n$ bits) we construct a solution to the 2D-RMQ problem
on a $m \times n$ array 
combining the two (with the 1D array as the last row of the combined array) by using
Lemma \ref{lem:merging} to construct $m-1$ Cartesian trees answering queries of the form
$[i \cdots m] \times [j_1 \cdots j_2]$ for $1 \leq i \leq m-1$ using $(m-1)n$ additional
bits. 
\qed
\end{proof}

\begin{theorem}
\label{thm:smallm_lowerbnd}
The minimum space required for any encoding for the 2D-RMQ problem
on a $m \times n$ array is at least $n \cdot (3m-1) - O(m \log n)$ bits.
\end{theorem}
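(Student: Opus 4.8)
The plan is to prove the bound by induction on $m$, identifying the effective entropy with $\log_2$ of the number of equivalence classes of $m \times n$ arrays under 2D-RMQ, and showing that appending one row multiplies the number of classes by at least $C_n \cdot 2^n$, where $C_n$ is the $n$-th Catalan number (the number of Cartesian trees on $n$ nodes). Since $C_n \ge 4^n/((n+1)(2n+1))$ gives $\log C_n = 2n - O(\log n)$, and the extra factor $2^n$ contributes $n$ bits, each additional row raises the log-count by $3n - O(\log n)$. Starting from the base case $m=1$, where the classes are exactly the $C_n$ Cartesian trees and the entropy is $2n - O(\log n)$ bits, unwinding $m-1$ steps yields $(2n - O(\log n)) + (m-1)(3n - O(\log n)) = (3m-1)n - O(m\log n)$ bits, which is the claim.

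For the inductive step I would exhibit three pieces of data as independent, recoverable components of the $m$-row instance: the 2D-RMQ equivalence class of the top $m-1$ rows; the Cartesian tree $B$ of the bottom row; and the $n$-bit merge string $c$ produced by Lemma~\ref{lem:merging} when combining the column-maxima of the top block with the bottom row. Recoverability is the easy direction: the top-block class is read off from queries lying entirely within rows $1..m-1$; the tree $B$ is read off from single-row queries on row $m$; and, exactly as in the lower-bound argument of Lemma~\ref{lem:merging} (property~(a)), two distinct merge strings that first differ in position $i$ share the same associated column range $[l_i,r_i]$, so the query $[1\cdots m]\times[l_i\cdots r_i]$ distinguishes them. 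Hence distinct triples give 2D-RMQ-distinguishable arrays, and the count for $m$ rows is at least (count for $m-1$ rows) $\cdot\, C_n \cdot 2^n$.

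The main obstacle is the \emph{realizability} direction: for every choice of top-block class, bottom Cartesian tree $B$, and merge string $c$, I must produce an actual $m \times n$ array realizing all three at once. This is where the DAG-acyclicity construction from the proof of Lemma~\ref{lem:merging} must be generalized from a single top row to a full $(m-1)$-row block. The observation that makes this go through is that both the surviving top-block queries and the merge itself depend on the top block \emph{only} through the Cartesian tree of its column maxima, since every contiguous-column-range maximum is determined by that tree. One may therefore invoke the two-row construction of Lemma~\ref{lem:merging} with the column-maxima sequence in the role of the top row (the reduction already noted in that lemma), obtaining consistent values for the column maxima and the bottom row, and then reassign the remaining top-block entries order-isomorphically. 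Because the comparisons among column maxima that are left ambiguous by their Cartesian tree never affect any contiguous-column query or the merge, this reassignment preserves the top-block class while simultaneously enforcing $B$ and $c$; checking that the resulting combined partial order is acyclic is the one genuinely new piece of work.

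Finally I would assemble the base case, the multiplicative inductive step, and the estimate $\log C_n \ge 2n - O(\log n)$, absorbing the $O(\log n)$ slack incurred at each of the $m$ levels into the stated $-O(m\log n)$ term to conclude the bound $n\cdot(3m-1) - O(m\log n)$.
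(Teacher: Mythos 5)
Your proposal is correct and takes essentially the same route as the paper's proof: induction on $m$ via Lemma~\ref{lem:merging}, with each added row contributing the same three independent, recoverable components (the RMQ class of the top $m-1$ rows, the bottom-row Cartesian tree, and the $n$-bit merge string), anchored by the $2n-O(\log n)$ base case from counting Cartesian trees, which yields $2n + (m-1)\cdot 3n - O(m\log n)$ bits exactly as in the paper. The realizability step you single out as ``the one genuinely new piece of work'' is precisely what the paper dispatches with its one-line remark in Lemma~\ref{lem:merging} that the general case reduces to two rows by passing to the column maxima of the first $m-1$ rows, so your added caution there is reasonable but not a different approach.
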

\begin{proof}s
The result follows by induction from Lemma \ref{lem:merging}  and the fact that
$2n - O( \log n)$ bits are required to solve the RMQ problem for a $1 \times n$ array 
(the base case). Note that any encoding that solves the 2D RMQ problem for a
$m \times n$ array must be able to solve the 1D RMQ problem on its last row,
the 2D RMQ problem on the array consisting of the first $m-1$ rows as well 
as queries of the form $[1 \cdots m] \times [j_1 \cdots j_2]$ ($1 \leq j_1 \leq j_2 \leq n$).
The first two problems are entirely independent, i.e., answers to queries from one 
provide no information about the answers to the other. 
The 1D problem requires $2n - O(\log n)$ bits and the additional queries require
at least $n$ bits by Lemma \ref{lem:merging}, i.e., $3n - O(\log n)$ bits are needed on top
of those required for solving the problem on the first $m-1$ rows. 
\qed
\end{proof}
For fixed $m < 2^{12}$ this lower bound is better (in $n$)
than that of Brodal et al. \cite{Brodal2010}.
For the case $m = 2$ our bounds are tight:

\begin{corollary}
\label{cor:m_equals_2}
$5n - O( \log n)$ bits are necessary and sufficient for an encoding 
answering range maximum queries on a $2 \times n$ array.
\end{corollary}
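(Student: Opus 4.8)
The plan is simply to instantiate Theorems~\ref{thm:smallm_upperbnd} and~\ref{thm:smallm_lowerbnd} at $m=2$ and observe that the two bounds coincide up to the $O(\log n)$ additive term. Setting $m=2$ in the lower bound of Theorem~\ref{thm:smallm_lowerbnd} gives $n\cdot(3\cdot 2 - 1) - O(2\log n) = 5n - O(\log n)$ bits, which is exactly the necessity direction; no additional work is needed there.

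For sufficiency I would not quote Theorem~\ref{thm:smallm_upperbnd} verbatim, since with $m=2$ it yields only $n\cdot\frac{2\cdot 5}{2}=5n$ bits, i.e. it lacks the $-O(\log n)$ saving we want. Instead I would re-examine its inductive construction in the base case $m=2$: it represents row~$1$ and row~$2$ each by a $1\times n$ Cartesian tree, and then supplies $n$ merging bits via Lemma~\ref{lem:merging}. The key observation is that an $n$-node Cartesian tree is one of the $\frac{1}{n+1}\binom{2n}{n}$ possible binary-tree shapes, so by a standard succinct binary-tree encoding each tree can be stored in $\log\!\left(\frac{1}{n+1}\binom{2n}{n}\right) = 2n - \frac{3}{2}\log n - O(1) = 2n - O(\log n)$ bits rather than the $2n$ bits charged in the generic induction. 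Adding the two trees and the exactly $n$ merging bits gives $2(2n - O(\log n)) + n = 5n - O(\log n)$ bits, establishing the upper bound.

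Combining the matching lower and upper bounds proves the corollary. The only point requiring care (the mild ``obstacle'') is confirming the $-O(\log n)$ saving on the upper-bound side: one must check both that each Cartesian tree can genuinely be compressed to the Catalan-number information-theoretic minimum, and that the merge contributes \emph{exactly} $n$ bits and not more. The former is a routine succinct-tree fact, while the latter is precisely the sufficiency half of Lemma~\ref{lem:merging}, so no real difficulty remains.
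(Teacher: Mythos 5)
Your proposal is correct and matches the paper's own derivation: the corollary is obtained exactly by instantiating Theorem~\ref{thm:smallm_lowerbnd} at $m=2$ for necessity, and for sufficiency by the construction of Theorem~\ref{thm:smallm_upperbnd} (two row Cartesian trees plus the exactly $n$ merge bits of Lemma~\ref{lem:merging}), where the paper too implicitly charges each Cartesian tree at the Catalan-optimal $2n - O(\log n)$ bits (as its comparison table and introduction confirm). Your explicit check that the trees compress to $\log\bigl(\frac{1}{n+1}\binom{2n}{n}\bigr)$ bits and that the merge costs exactly $n$ bits is the same accounting the authors intend, so nothing further is needed.
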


For the case $m=3$, our upper bound is $9n$ bits and our lower bound is
$8n - O(\log n)$ bits. 
We can improve the upper bound slighty:

\begin{theorem}
\label{thm:m_equals_3}
The 2D-RMQ problem can be solved using at most $(6+ \log 5)n + o(n)\approx 8.322 n$ 
bits
on a $3 \times n$ array.
\end{theorem}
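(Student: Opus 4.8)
The plan is to sharpen the inductive construction of Theorem~\ref{thm:smallm_upperbnd}, which for $m=3$ spends $6n$ bits on the three row Cartesian trees and a further $3n$ bits on three independent applications of Lemma~\ref{lem:merging} (yielding the joint trees for the row ranges $\{1,2\}$, $\{2,3\}$ and $\{1,2,3\}$). The $6n$ term is forced: single-row queries $[i\cdots i]\times[j_1\cdots j_2]$ require each row's Cartesian tree, costing $2n$ bits per row, and the three rows are mutually independent (as in the reasoning behind Corollary~\ref{cor:m_equals_2}). The entire saving must therefore come from the \emph{merge} information, and the goal is to drive the three merges down from $3n$ bits to $n\log 5 + o(n)$ bits, where $\log 5 = \log\!\big(\tfrac14\binom{6}{3}\big)$ is exactly the per-column Cartesian-tree constant already appearing in the table for $m=3$.

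First I would store the three row trees $T_1,T_2,T_3$ in $6n$ bits and then build the three joint trees $T_{12},T_{23},T_{123}$ by a single \emph{coordinated} merge rather than three separate ones. Recursing on column ranges as in the upper-bound half of Lemma~\ref{lem:merging}, consider a range $[l,r]$ and let $x_1,x_2,x_3$ be the three row maxima over $[l,r]$, whose columns are read off from $T_1,T_2,T_3$. To place the roots of $T_{12}$, $T_{23}$ and $T_{123}$ over $[l,r]$ it suffices to know (i) whether $x_1>x_2$, (ii) whether $x_2>x_3$, and (iii) the maximum of $x_1,x_2,x_3$.

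The central observation is that the joint outcome of (i)--(iii) is not one of $2^3$ possibilities but one of the $5$ RMQ-distinct order types of three elements, i.e.\ the $C_3=5$ Cartesian trees on three nodes. The full order of $(x_1,x_2,x_3)$ has $3!=6$ linear extensions, but the two orders with $x_2$ largest are indistinguishable for (i)--(iii) (the relative order of $x_1,x_3$ is never needed once $x_2$ is the maximum), collapsing $6$ to $5$; in the remaining cases transitivity fixes the maximum once (i) and (ii) are known, so no further comparison is charged. I would thus emit, at each of the $n$ nodes of the coordinated recursion, one symbol of a size-$5$ alphabet recording this order type, and pack the resulting length-$n$ string as a mixed-radix (base-$5$) integer into $\lceil n\log 5\rceil$ bits (arithmetic coding~\cite{DBLP:reference/algo/HowardV08} with probabilities $1/5$ is an equivalent alternative). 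Decoding reverses the sweep: from $T_1,T_2,T_3$ and the symbol stream one rebuilds $T_{12},T_{23},T_{123}$ top-down, a multi-row query is answered by a lowest-common-ancestor lookup in the appropriate joint tree followed by peeling to the correct row exactly as in Lemma~\ref{lem:merging}, and single-row queries use the $T_i$ directly; summing gives $(6+\log 5)n + o(n)$ bits.

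The main obstacle is proving that a \emph{single} coordinated recursion really furnishes all three joint trees while charging only one size-$5$ symbol per node, i.e.\ that the three merges can be driven off a common sequence of $n$ order-type outcomes rather than $3n$ independent bits. The difficulty is that the recursion trees of $T_{12}$, $T_{23}$ and $T_{123}$ do not share the same node ranges, so one must show that the order-type symbols recorded along one canonical recursion (say that of $T_{123}$), together with the row trees, already determine every comparison requested while building the other two trees. I expect this to need a careful induction exploiting the nesting/disjointness property of associated ranges (property~(b) in the proof of Lemma~\ref{lem:merging}), arguing that each comparison sought inside $T_{12}$ or $T_{23}$ has, up to transitivity, been resolved by a recorded order type. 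Finally I would remark that this collapse to $C_3$ possibilities per column is a small-$m$ phenomenon and does not conflict with the $\Omega(N\log m)$ worst-case bound, since for fixed $m$ the effective entropy is only $\Theta(n)$.
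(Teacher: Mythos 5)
Your size-$5$ alphabet observation is sound in isolation: the five Cartesian trees on three nodes do capture exactly the three facts $x_1$ vs $x_2$, $x_2$ vs $x_3$, and the argmax (though your parenthetical that ``transitivity fixes the maximum once (i) and (ii) are known'' is inaccurate when $x_1>x_2$ and $x_3>x_2$ --- there it is precisely item (iii) that separates the two remaining types). The fatal problem is the claim you yourself flag as the main obstacle: that the order-type symbols recorded along the single canonical recursion of $T_{123}$, together with the row trees, determine every comparison needed for $T_{12}$ and $T_{23}$. This claim is not just unproven, it is \emph{false}. Take $n=3$ with rows $T=(50,\,t_2,\,1)$, $M=(0.6,\,0.5,\,20)$, $B=(0.7,\,100,\,0.8)$, and compare $t_2=30$ against $t_2=10$. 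In both cases the three row Cartesian trees are identical, and the $T_{123}$ recursion visits exactly the ranges $[1,3]$, $[1,1]$, $[3,3]$, with row-maxima triples $(50,20,100)$, $(50,0.6,0.7)$, $(1,20,0.8)$: the element $t_2$ is never a row maximum at any visited range, so all three five-ary symbols coincide for the two inputs. Yet the query $[1\cdots 2]\times[2\cdots 3]$ is answered by position $(1,2)$ when $t_2=30$ and by position $(2,3)$ when $t_2=10$. The mechanism of failure is the range mismatch you anticipated: $T_{12}$'s recursion needs $\max_T[2,3]$ vs $\max_M[2,3]$, i.e.\ $t_2$ vs $m_3$, over a range straddling $T_{123}$'s split column, and transitivity through the recorded facts $t_1>m_3>t_3$ with $t_1>t_2>t_3$ leaves $t_2$ vs $m_3$ undetermined. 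So no ``careful induction up to transitivity'' can exist, and a scheme that charges one fixed symbol per node of one canonical recursion cannot work.

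The paper's proof avoids this by \emph{not} using a uniform per-node alphabet at all. It stores, for each in-order node of $T_{123}$, only which row (T/M/B) contains the maximum, arithmetic-coded with $\Pr[\mathrm{M}]=x$ and $\Pr[\mathrm{T}]=\Pr[\mathrm{B}]=(1-x)/2$ where $x$ is the empirical fraction of M-labels; it then runs the merge of Lemma~\ref{lem:merging} separately for TM and for MB \emph{on their own recursion ranges}, storing a comparison bit for a range only when the TMB structure queried on that same range fails to resolve it (for TM, only when TMB's maximum there lies in B) --- in my counterexample this is exactly where the missing $t_2$-vs-$m_3$ bit gets stored. Counting, the residual merge bits cost $2n-(1-x)n-2xn=(1-x)n$, and the total $(2(1-x)-x\log x-(1-x)\log(1-x))n$ is maximized at $x=1/5$, yielding $n\log 5$. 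So in the paper, $\log 5$ emerges from a worst-case optimization over the label distribution, not from a $5$-ary symbol per column; the adaptivity (bits placed where TM/MB actually need them, fewer TMB-label bits when $x$ is skewed) is what your fixed-symbol scheme discards, and the counterexample shows the discarded information is essential.
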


\begin{proof}
We refer to the three rows of the array as T (top), M (middle) and B (bottom). 
We store Cartesian trees for each of the three rows (using $6n$ bits). 
We now show how to construct
data structures for answering queries for 
TM (the array consisting of the top and middle rows),
MB (the middle and bottom rows) and TMB (all three rows) 
using an additional $n \log 5 + o(n)$ bits. 
Let $0 \leq x \leq 1$ 
be the fraction of nodes in the Cartesian tree for TMB such that  
the maximum lies in the middle row. 
Given the trees for each row, and a
sequence indicating for each node 
in the Cartesian tree for TMB in in-order,  which 
row contains the maximum for that node,
we can construct a data structure for TMB.
The sequence of row maxima
is coded using arithmetic coding, taking $\Pr[\mathrm{M}] = x$
and $\Pr[\mathrm{B}] = \Pr[\mathrm{T}] = (1 - x)/2$; the output
takes $(-x \log x - (1-x) \log ((1-x)/2))n + o(n)$ bits 
\cite{DBLP:reference/algo/HowardV08}.

We now apply the same procedure as in Lemma \ref{lem:merging} to construct 
the Cartesian trees for TM and MB, storing a bit to answer
whether the maximum is in the top or middle row for TM (middle or
bottom row for MB) for each query made in
the construction of the tree starting with the root. However,
before comparing the maxima in T and M in some range, we check 
the answer in TMB for that range; if TMB reports the answer is
in either T or M, we do not need to store a bit for that range for TM.  It is
easy to see that every maximum in TMB that comes from T or B saves one 
bit in TM or MB, and every maximum in TMB that comes from M saves one 
bit in both TM and MB.  Thus, a total of $2n - (1-x)n - 2xn = (1-x)n$ 
bits are needed for TM and MB.
The total number of bits needed for all three trees (excluding the $o(n)$ term) 
is $(2(1-x) - x \log x - (1-x) \log (1-x))n$. This takes a maximum at $x=1/5$ of $n \log 5$.
\qed
\end{proof}

\section{Data Structures for 2D-RMQ}
\label{sec:datastructures}

\newcommand{\Rank}{\mbox{\sf rank}}
\newcommand{\Select}{\mbox{\sf select}}
\newcommand{\Rankzero}{\mbox{$\mbox{\sf rank}_0$}}
\newcommand{\Rankone}{\mbox{$\mbox{\sf rank}_1$}}
\newcommand{\Selectzero}{\mbox{$\mbox{\sf select}_0$}}
\newcommand{\Selectone}{\mbox{$\mbox{\sf select}_1$}}

In this section, we give efficient data structures for 2D-RMQ. 
We begin by a recap of $\Rank$ and $\Select$ operations on bit strings.
Given a bit string $S[1..t]$ of length $t$, define the following
operations, for $x \in \{0,1\}$:
\begin{itemize}
\item $\Rank_x(S,i)$ returns the number of occurrences of $x$ in the
  prefix $S[1..i]$.
\item $\Select_x(S,i)$ returns the position of the $i$th occurrence of
  $x$ in $S$.
\end{itemize}
Such a data structure is called a \emph{fully indexable dictionary (FID)}
by Raman et al. \cite{DBLP:conf/soda/RamanRR02}, who show that:
\begin{theorem} \label{thm:fid}
There is a FID for a bit string $S$ of size $t$
using at most ${\lg {t \choose r}}+ o(t)$ bits, that
supports all operations in $O(1)$ time 
on the RAM model with wordsize $O(\lg t)$ bits, where
$r$ is the number of 1s in the bit string.
\end{theorem}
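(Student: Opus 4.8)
The plan is to follow the standard block-decomposition strategy for succinct indexable dictionaries. First I would partition $S$ into $\lceil t/b\rceil$ consecutive blocks of length $b = \tfrac12 \lg t$. Each block is stored not verbatim but as a \emph{(class, offset)} pair: the \emph{class} $c$ is the number of 1s in the block (an integer in $0 \ldots b$, stored in $\lceil \lg(b+1)\rceil$ bits), and the \emph{offset} identifies the block among the $\binom{b}{c}$ distinct blocks of that class, taking $\lceil \lg\binom{b}{c}\rceil$ bits. The offsets are concatenated into one long bit stream and the classes into a separate array; boundary pointers into the offset stream (described below) let us locate the encoding of any individual block.

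The space analysis hinges on a Vandermonde-type convolution identity. If $c_1,\ldots,c_{\lceil t/b\rceil}$ are the per-block classes, then $\sum_i c_i = r$, and since $\binom{t}{r} = \sum \prod_i \binom{b}{c_i}$ (summed over all class vectors summing to $r$) we get $\prod_i \binom{b}{c_i} \le \binom{t}{r}$. Hence the total offset length $\sum_i \lceil \lg\binom{b}{c_i}\rceil \le \lg\binom{t}{r} + O(t/b)$, the rounding loss being $O(t/b) = o(t)$. The class array costs $O((t/b)\lg b) = O(t\,\lg\lg t / \lg t) = o(t)$ bits, so the $\lg\binom{t}{r}$ leading term is absorbed entirely by the offset stream and everything else must be shown to be $o(t)$.

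For constant-time queries I would add a two-level directory. Group blocks into superblocks of $\Theta(\lg^2 t)$ bits; store the absolute 1-count at each superblock boundary, the relative count at each block boundary, and the bit-position in the offset stream at each such boundary, all of which is $o(t)$ bits. To finish a query inside a single block in $O(1)$ time, precompute two universal lookup tables indexed by a (class, offset) encoding: one decoding the pair back to the raw $b$-bit block, and one answering within-block rank. Since a block encodes in at most $b$ bits, each table has $2^{b}\cdot\mathrm{poly}(\lg t) = O(\sqrt{t}\,\mathrm{polylog}\,t) = o(t)$ entries. Then $\Rank_1(S,i)$ sums the superblock count, the block count, and the within-block table value, and $\Rank_0(S,i) = i - \Rank_1(S,i)$; $\Select$ first descends the directory to the target superblock and block, then consults an analogous within-block select table.

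The hard part will be the $\Select$ index: the cumulative-count directory supports $\Rank$ immediately but supports $\Select$ only via search, so achieving genuine $O(1)$ select requires a carefully balanced sampling of selected positions — separating sparse from dense superblocks — that permits constant-time descent to the correct block while keeping the auxiliary index within $o(t)$ bits. Verifying that this select sampling, the class array, the boundary pointers, and both lookup tables \emph{simultaneously} fit in $o(t)$ bits, while the offset stream carries the full $\lg\binom{t}{r}$ leading term, is the delicate accounting at the heart of the proof.
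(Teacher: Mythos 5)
Your proposal is correct and follows essentially the same route as the paper's source for this result: the paper does not prove Theorem~\ref{thm:fid} itself but imports it from Raman, Raman and Rao, whose construction is precisely your block-wise (class, offset) encoding with the Vandermonde-type bound $\prod_i \binom{b}{c_i} \le \binom{t}{r}$, a two-level (superblock/block) rank directory, $o(t)$-bit universal lookup tables on $b = \tfrac12\lg t$-bit blocks, and a Clark-style sampled index for constant-time select. One point to state more carefully: your pointers into the variable-length offset stream must themselves use the two-level scheme (absolute position per superblock, relative position per block, the latter fitting in $O(\lg\lg t)$ bits), since storing an absolute bit-position at every block boundary would cost $\Theta((t/b)\lg t) = \Theta(t)$ bits and break the $o(t)$ redundancy budget.
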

Since $\lg {t \choose r} \le t$, the space used by the FID is always $t + o(t)$ bits.

\begin{theorem}
\label{thm:random}
There is a data structure for 2D-RMQ on a
random $m \times n$ array $A$ which answers queries in $O(1)$ time
using $O(mn)$ expected bits of storage.
\end{theorem}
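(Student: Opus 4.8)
The plan is to use the region-based encoding underlying Theorem~\ref{thm:4sided2D} as the space backbone, and to augment it so that the single region containing a query box can be located in constant time. Recall that in that encoding each position $(i,j)$ is assigned a region $R_{i,j}$ --- the maximal contiguous area such that every box containing $(i,j)$ and lying inside the area is answered by $(i,j)$ --- whose boundary is a monotone staircase in each of the four quadrants around $(i,j)$, and that the expected total number of delimiters over all positions is $O(mn)$. Thus storing all region boundaries costs $O(mn)$ expected bits. The answer to a $4$-sided query $q=[i_1\cdots i_2]\times[j_1\cdots j_2]$ is the unique position $p=(i^*,j^*)\in q$ with $q\subseteq R_p$, so the whole problem reduces to locating this $p$ quickly.

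First I would split the query into a column-finding step and a row-finding step. Once the answer column $j^*$ is known, the answer row is simply the row of the maximum of column $j^*$ restricted to rows $[i_1\cdots i_2]$, i.e.\ a $1$D-RMQ inside a single column: since the global maximum of the subarray lies in column $j^*$, it is in particular the largest entry of that column over $[i_1\cdots i_2]$. Storing a Cartesian tree per column together with a constant-time lowest-common-ancestor structure uses $O(mn)$ bits and resolves this step in $O(1)$ time. Hence the real work is to report the answer column $j^*$ in constant time.

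For the column-finding step I would build a constant-time point-location structure over the stored regions, exploiting two facts: (a) the region boundaries are monotone staircases, so containment of the box $q$ in $R_p$ reduces to comparing the extreme corner of $q$ in each quadrant against a monotone sequence; and (b) for random $A$ the expected total complexity of all regions is $\Theta(mn)$, so the structure can afford only $O(1)$ amortized information per delimiter. Concretely, I would impose a coarse grid on the column axis, use the fully indexable dictionaries of Theorem~\ref{thm:fid} to navigate between blocks and delimiters in $O(1)$ time, and precompute the answers to block-aligned queries; the residual sub-block part of a query would be resolved against the delimiters it meets, packing the relevant staircase into a single machine word so that the comparisons take $O(1)$ time on the RAM. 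Charging the precomputed tables and the packed region data against the $\Theta(mn)$ expected region complexity keeps the total space at $O(mn)$ expected bits.

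The hard part will be this column-finding step: although the \emph{expected} total size of the regions is $O(mn)$, individual regions can be large, so a direct per-region search is not $O(1)$ in the worst case. The crux is therefore to design the grid-and-dictionary layout so that the worst-case query time is $O(1)$ regardless of how large any single region is, while the space stays $O(mn)$ in expectation over the choice of $A$. Balancing these two requirements --- worst-case time against expected space --- is where the main technical effort lies, and is also the point at which the randomness of $A$ must be invoked most carefully.
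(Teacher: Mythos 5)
Your proposal correctly isolates the bottleneck but does not overcome it, so there is a genuine gap. Observing that the answer is the unique $p \in q$ with $q \subseteq R_p$ is essentially a restatement of 2D-RMQ, and everything then hinges on the column-finding step, for which the mechanisms you sketch do not work as stated. Packing ``the relevant staircase into a single machine word'' fails because a single region can have $\Theta(m+n)$ delimiters (the region of the global maximum is the entire array), far exceeding one word. Precomputing ``the answers to block-aligned queries'' over one coarse grid is also problematic: the number of block-aligned rectangles grows like the square of the number of aligned intervals per axis, so with each answer costing $\Theta(\log N)$ bits you either exceed $O(N)$ bits or must make blocks so large that the residual sub-block queries are again a full-strength instance of the problem; layering grids to escape this reintroduces logarithmic factors, as in sparse tables. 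Most tellingly, nothing in your sketch uses the randomness of $A$ to control the \emph{query time} --- you invoke it only for expected space --- and you explicitly concede that balancing worst-case time against expected space ``is where the main technical effort lies.'' That balancing act is the theorem, not a deferrable detail.

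The paper's proof supplies exactly the missing mechanism and does not go through the regions of Theorem~\ref{thm:4sided2D} at all. It buckets values by the label $\min\{\lceil\log(1/(1-z))\rceil,\lambda\}$ with $\lambda=\lceil 2\log\log N\rceil+1$, so elements of label $x$ have density roughly $2^{-x}$; for each label $x$ it stores relative ranks inside grid boxes of side $2^{2x}$ (in four shifted copies of the grid), so a query whose maximal label is $x$ and which fits in such a box is resolved locally by table lookup plus a two-level sub-box decomposition; the maximal label inside a query is found in $O(1)$ time via Brodal et al.'s $O(N)$-bit hybrid index over the label matrix $L$. The decisive use of randomness for query time is the failure analysis: a query of area $A$ contains no element of the appropriate label only with probability $2^{-\Omega(\sqrt{A})}$, so the expected number of failing queries of area $A$ is $O(N/A^3)$, and their answers are stored \emph{explicitly}, indexed by minimal perfect hash functions, in $O(N)$ expected bits. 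Converting the rare bad events into an explicitly stored exception table is what yields $O(1)$ query time with $O(mn)$ expected bits, and this idea --- value bucketing, sparsity-matched grid granularities, and hashed exceptions --- is absent from your proposal.
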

\begin{proof}
Take $N = mn$ and $\lambda = \lceil 2 \log \log N\rceil + 1$, and define
the \emph{label} of an element $z = A[i,j]$ as $\min\{\lceil \log(1/(1-z))\rceil,\lambda\}$ if
$z\ne 0$.
The labels bucket the elements into $\lambda$ buckets of exponentially
decreasing width.  We store the following:
\begin{itemize}
\item[(a)] An $m \times n$ array $L$, where $L[i,j]$ stores the label of
$A[i,j]$.
\item[(b)] For labels $x = 1, 2, \ldots, \lambda -1$, take $r = 2^{2x}$ and 
 partition $A$ into $r\times r$ submatrices 
(called \emph{grid boxes} or \emph{grid sub-boxes} below) using a 
regular grid with lines $r$ apart.
Partition $A$ four times, with the origin of the grid once each at 
$(0,0), (0,r/2),(r/2,0)$ and $(r/2,r/2)$. For each grid box,
and for all elements labelled $x$ in it, store their relative ranks within
the grid box.  
\item[(c)] For all values with label $\lambda$, store their global ranks in the entire array.
\end{itemize}
The query algorithm is as follows:
\begin{enumerate}
\item Find an element with the largest label in the query rectangle.

\item If the query contains an element with label $\lambda$, or if the maximum label
is $x$ and the query fits into a grid box at the granularity associated with label $x$,
then use (c) or (b) respectively to answer the query. 
\end{enumerate}
A query fails if the 
maximal label in the query rectangle is $x < \lambda$ 
but the rectangle does not fit into 
any grid box associated with label $x$.  For this case:
\begin{itemize}
\item[(d)] Explicitly store the answer for all queries that fail.
\end{itemize}
We now give an efficient implementation of the above, and prove the stated
space bounds.

The data structures to support steps (1) and (2) also store (a) and (b) in $O(N)$ bits. 
Firstly, $L$ is represented as a bit string that represents the concatenation of
all elements of $L$ in row-major order, with 
a $x$ encoded in unary as $1^x 0$.
Since the expected number of nodes with label $\ge x$ is $O(N/2^x)$, it follows
that the expected size of this encoding is $O(N)$ bits.  To access $L[i,j]$ in $O(1)$ time,
we store this bit string as an FID (Theorem~\ref{thm:fid}) and 
use the $\Selectzero$ operation on this bit string.  Furthermore, we 
use Brodal et al.'s  ``hybrid'' {2D-RMQ indexing} structure \cite[Theorem 3]{Brodal2010} 
over the array $L$: this data structure stores an index of $O(N)$ bits, 
and answers queries in $O(1)$ time,  
using $O(1)$ comparisons between elements of $L$. The comparisons are implemented
by accessing $L$ and breaking ties arbitrarily, implementing step (1).

However, we cannot use the above approach to implement step (2), since the problem
now focusses on a sparse set of points with label $x$.  Hence, we use
a data structure to solve the following problem: for each label value $x < \lambda$,
answer range maximum queries which lie entirely in a grid box of size 
$r\times r$ where $r = 2^{2x}$.  We allow the data structure for a given grid
box to use $O((t+1)x)$ bits of space, where $t$ is the number of elements in the
box that have label $x$.  Note, however, that $x \le 2 \log \log N + 1$ so 
$r = O((\log N)^4)$.  For any grid box where $r^2 \le c \log N / \log \log N$ for
some sufficiently small constant $c > 0$, this can be done by table lookup, 
since we can write down all the coordinates as well as the relative priorities 
in as a bit string of fewer than $(\log N)/2$ bits: the required table will be
of size at most $O(\sqrt{N})$ words, or $o(N)$ bits.  The space used per grid block
is clearly $O(tx)$ since the bit string comprises just the positions of the points
and their relative priorities, all of which take $O(x)$ bits.  The expected space
used across all grid blocks for label $x$ is at most $O( x N / 2^x)$ bits, summing
up to $O(N)$ expected bits overall.

For larger values of $x$ we use a data structure that takes $o(N / \log \log N)$ expected bits for each value of $x$, but since there are $O(\log \log N)$ values of $x$ this is still $o(N)$ bits overall. We divide each grid box into grid sub-boxes each of side $r'$, where $r' = 2^{2y}$ for the largest $y$ such that $(r')^2 \le c \log N/\log \log N$.  The number of sub-boxes is $N/(r')^2$, or $O(N \log \log N / \log N)$.  For this grid box, we store a matrix $R$ which is $(r/r')\times r$ where each entry corresponds to a row of a sub-box, and contains the largest relative rank in this row of this sub-box.  The space used by $R$ is $O((N/r') \log \log N) = o(N/\log \log N)$. Using Brodal et al's data structure we can do 2D-RMQ queries on $R$.  We also create a $r \times (r/r')$ matrix C where each entry corresponds 
to a column of a sub-box, and similarly we can do 2D-RMQ queries on $C$.
A general query can either be decomposed into $O(1)$ queries on $R$ and $C$, 
plus $O(1)$ 2-sided or 3-sided queries each on one sub-block, or is a 4-sided 
query completely contained in a sub-block, each of which is done by table lookup.  
This implements step (2) in $O(1)$ time using $O(N)$ bits.
A very similar data structure is used to handle elements with label $\lambda$.  We divide the input matrix $A$ into sub-boxes of size $\log N \times \log N$, and create matrices $R'$ and $C'$ which represent elements with label $\lambda$ in each row/column respectively.  We store $R'$ and $C'$ explicitly using $O(N)$ bits each and store a 2D-RMQ indexing structure on $R'$ and $C'$.  To answer queries inside a sub-block, we use an algorithm quite similar to that used for the smaller labels.

We finally need to bound the space usage in (d), and also to describe how to
represent the solutions.  We classify queries based upon their \emph{area}, i.e.
the number of positions they contain. For a given value of the area 
$A$, there are at most $A$ (in fact, considerably fewer) different 
aspect ratios that give rise to that area, and at most $N$ queries with
that aspect ratio, or $NA$ queries in all.  To encode the maximum in
a query with area $A$ requires $O(\log A)$ bits.  The smallest grid granularity
that will contain all queries of area $A$ is the granularity associated with 
label $x = \lceil (\log A)/2 \rceil$.  If a query of area $A$ contains no
positions with labels $\ge x$, it may fail.  The  probability of this
happening is at most $(1-2^{-x-1})^A = 2^{-\Omega(\sqrt{A})}$, so the expected
number of failing queries of area $A$ is $O(N/A^3)$.  
For each area $A \le (\log N)^4$, we store a minimal perfect hash function from 
$[1\cdots N] \times [1 \cdots A] \rightarrow [1..N_A]$, where
$N_A$ is the number of failing queries of area $A$ (the domain of the
hash function specifies the top left corner and, say, the width of the query).
Such hash functions can be stored in $O(N_A + \log \log N)$ bits and
evaluated in $O(1)$ time \cite{DBLP:journals/siamcomp/SchmidtS90}, and are used to index into
an array of length $N_A$ that contains the answer to that query.  The space
used is $\sum_{A=1}^{(\log N)^4} O(N_A \cdot \log A + \log \log N)$ bits, and
since $\mathrm{E}(N_A) = O(N/A^3)$, by linearity of expectation, the expected
space used is $O(N)$. \qed{} \end{proof}
We now show how to support 2D-RMQ queries efficiently on a $2 \times n$ 
array, using space close to that of Corollary~\ref{cor:m_equals_2}.
In the following data structure, we use a Cartesian tree augmented with 
additional leaves, which we call as an {\em augmented Cartesian tree} or
ACT for short. The ACT of a given array $A$ is the Cartesian tree in which
every node is augmented with a leaf in between its left and right children. 
If a node has only a left child then we add the leaf as its second child, and 
if it only has a right child, then we add the leaf as its first child. Finally, if a 
node (in the Cartesian tree) is a leaf, then we add the new leaf as its child.
This structure was used by Sadakane~\cite{Sadakane02-ACT} to 
obtain a space-efficient data structure supporting RMQ queries. The indices
of the array $A$ correspond to the inorder numbers of the nodes in the 
Cartesian tree, and to the preorder numbers of the leaves in the ACT.

\begin{theorem}
\label{thm:m2}
There is a data structure for 2D-RMQ on 
an arbitrary $2 \times n$ array $A$, 
which answers queries in $O(k)$ time
using $5n + O(n \log k / k)$ bits of space,
for any parameter  $k = (\log n)^{O(1)}$.
\end{theorem}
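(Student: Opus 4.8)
The plan is to realize the optimal $5n$-bit encoding of Corollary~\ref{cor:m_equals_2} in a \emph{navigable} form and pay only $O(n \log k / k)$ extra bits for the time/space tradeoff. Recall that this encoding consists of the two row Cartesian trees $T$ (top) and $B$ (bottom) together with the $n$ ``merge bits'' produced by the construction in Lemma~\ref{lem:merging}. First I would store $T$ and $B$, each in the $2n$-bit augmented-Cartesian-tree (ACT) form of Sadakane~\cite{Sadakane02-ACT}, so that a single-row query $[1]\times[j_1 \cdots j_2]$ or $[2]\times[j_1\cdots j_2]$ becomes an ordinary $1$D-RMQ on the corresponding sequence. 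To answer these in $O(k)$ time with only $O(n\log k/k)$ redundancy I would equip each $2n$-bit encoding with a \emph{sampled} range-minimum/rank-select index (blocks of size $k$, one summary of $O(\log k)$ bits per block) in place of the full $o(n)$-bit, $O(1)$-time index; this is the standard way to trade $O(1)$ query time for $O(k)$ time at reduced space.

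The interesting case is a two-row query $[1\cdots 2]\times[j_1\cdots j_2]$. Writing $c_i = \max\{A[1,i],A[2,i]\}$ for the column maxima, the answer is $\mathrm{argmax}_{i\in[j_1,j_2]} c_i$ together with the winning row, and the joint Cartesian tree of Lemma~\ref{lem:merging} is exactly the Cartesian tree of the implicit array $c$. The key reduction is: compute $p=\mathrm{RMQ}_{top}(j_1,j_2)$ and $q=\mathrm{RMQ}_{bot}(j_1,j_2)$ (each $O(k)$ by the above). Since the overall maximum equals $\max\{A[1,p],A[2,q]\}$ and is attained at column $p$ or column $q$, the whole query reduces to the \emph{single} cross-row comparison of $A[1,p]$ against $A[2,q]$. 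Moreover $A[1,p]$ is the top-row maximum and $A[2,q]$ the bottom-row maximum of the subrange $[\min\{p,q\},\max\{p,q\}]$, so this comparison is precisely one merge comparison of Lemma~\ref{lem:merging}; equivalently it is an RMQ on $c$ over that subrange, whose answer is promised to be one of the two endpoints.

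The main obstacle is performing this cross-row comparison in $O(k)$ time without storing the joint tree explicitly (which would cost a prohibitive extra $2n$ bits). I would first observe that a comparison between two elements that win in the \emph{same} row reduces to an $O(k)$-time RMQ in $T$ or $B$; the genuine difficulty is confined to the \emph{ambiguous} case where the two candidates win in different rows. As a two-column example shows, this outcome is \emph{not} determined by $T$, $B$ and the per-column ``which row wins'' bits alone --- which is exactly why the construction-ordered merge bits of Lemma~\ref{lem:merging} carry strictly more information and must be retained. To obtain $O(k)$ time I would decompose the columns into blocks of size $k$ and store a block-level joint Cartesian tree over the $n/k$ block maxima of $c$, each annotated with the within-block position of its maximum; this occupies $O((n/k)\log k)=O(n\log k/k)$ bits and resolves the coarse part of any RMQ on $c$ in $O(1)$ time. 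The residual work lives inside a single block of $O(k)$ columns, where the joint tree can be rebuilt on the fly: the $O(k)$-symbol windows of the ACT encodings of $T$ and $B$ yield the block-local row Cartesian trees in $O(k)$ time, and the merge of Lemma~\ref{lem:merging} is then replayed, consuming the merge bits for that block.

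The subtlest point --- and the step I expect to be hardest --- is arranging the $n$ merge bits so that the window relevant to a given block can be \emph{located and replayed} in $O(k)$ time: because the merge bits are emitted in a data-dependent order tied to the shape of the joint tree, a naive replay from the root costs $\Theta(n)$. I would therefore fold suitable checkpoints into the block-level structure so that each in-block reconstruction can start in the middle of the merge; the delicate part is doing this while keeping the total auxiliary data within $O(n\log k/k)$ bits (ordinary $\Theta(\log n)$-bit pointers per block would be a $\log n/\log\log n$ factor too large). Granting this, the structure uses $2n+2n+n+O(n\log k/k)=5n+O(n\log k/k)$ bits and answers every query type in $O(k)$ time; the hypothesis $k=(\log n)^{O(1)}$ is what keeps the sampled tree indices and the in-block table lookups within the stated bounds.
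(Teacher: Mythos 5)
Your overall architecture is the same as the paper's (the $5n$-bit core of two row structures plus the merge bits $M$, an $O(n\log k/k)$-bit auxiliary layer, and lazy $O(k)$-time reconstruction of local pieces of the joint Cartesian tree), and your observation that the construction-ordered bits of $M$ carry strictly more information than per-column ``which row wins'' bits is correct and well taken. However, the step you yourself flag as hardest and then assume (``Granting this\ldots'') is precisely the substance of the theorem, and your concrete plan for it --- a \emph{flat} decomposition of the columns into blocks of size $k$, a Cartesian tree over block maxima, and ``checkpoints'' into the construction-ordered stream $M$ --- has a genuine gap. The bits of $M$ belonging to a fixed column block are scattered through the stream in a data-dependent order, and by property (a) in the proof of Lemma~\ref{lem:merging} the range associated with the $i$-th comparison is determined by \emph{all} earlier outcomes, including out-of-block ones; worse, a column block is not a connected piece of the joint tree, so ``replaying the merge inside a block'' is not even well-defined (the comparison range attached to, say, the block's maximum column typically spans many blocks). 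A checkpoint permitting mid-stream decoding would have to encode merge state that you cannot bound by $O(\log k)$ bits per block. The paper resolves exactly this difficulty by abandoning flat column blocks: it applies the Farzan--Munro \emph{tree decomposition} to the ACT of the joint Cartesian tree, yielding $O(n/k)$ microtrees whose leaves occupy at most \emph{two consecutive column chunks}; it \emph{reorders} $M$, storing each microtree's bits contiguously in a canonical (preorder) order so no data-dependent global context is needed; and it proves local replayability via the key observation that every element in the gap between a microtree's two chunks is dominated by the chunk-boundary maxima, so each comparison is recovered by two $O(1)$-time row RMQs over the chunk-spanning range. The macro structure (chunk boundaries and how microtrees attach) is stored via FIDs (Theorem~\ref{thm:fid}) and a balanced-parenthesis sequence in $O(n\log k/k)$ bits --- and it is this stored joint-tree macro structure that supplies the cross-block comparisons your block-maxima tree cannot: when a boundary block's maximum lies outside the query, the order between that block's range-restricted maximum and another block's maximum is simply not determined by block maxima plus block-local trees.

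Two secondary devices in your proposal also misfire. First, sampling the row-RMQ indexes down to $O(k)$ query time conflicts with the reconstruction itself, which performs $\Theta(k)$ row RMQs per rebuilt piece and would then cost $O(k^2)$; the paper instead keeps the Fischer--Heun 2D-maxHeaps with full $O(1)$-time support and drives their redundancy down to $O(n/(\log n)^{O(1)})$ via the representations of Sadakane--Navarro and P\u{a}tra\c{s}cu, which fits the space budget --- only the \emph{joint} tree is reconstructed lazily. Second, your fallback of reading the block-local row Cartesian trees off ``$O(k)$-symbol windows of the ACT encodings'' does not work: a contiguous range of columns does not correspond to a contiguous window of the parenthesis encoding (e.g., the window maximum's node may open far outside the window), so local trees must be rebuilt from $O(1)$-time row RMQ/LCA queries, which is what the paper does.
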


\begin{proof}
For an integer array of length $n$, the {\em 2D-maxHeap}\footnote{Fischer 
and Heun~\cite{DBLP:journals/siamcomp/FischerH11} use 
the term 2D-minHeap as they consider the problem of answering
range minimum queries} of Fischer and 
Heun~\cite{DBLP:journals/siamcomp/FischerH11} 
uses $2n+O(n \log\log n /\log n)$ 
bits and supports RMQ queries on the array in $O(1)$ time. 
The lower order term can be further reduced to $O(n/(\log n)^{O(1)})$
using the tree representation of Sadakane and Navarro~\cite{SN10}. 
We use this 2D-maxHeap structure (which is essentially a 
space- and query-efficient representation of a Cartesian tree) 
to support queries on each of the individual rows, using a total of 
$4n + o(n)$ bits.
The upper bound of Lemma~\ref{lem:merging} shows how to
combine these two Cartesian trees (2D-maxHeaps) using $n$
bits, to answer queries involving both the rows. We refer to these
$n$ bits as the bit vector $M$ (that merges the two Cartesian trees).
Each of these $n$ bits in $M$ corresponds to a unique column in $A$, 
and the bits in $M$ are written in the order of the inorder numbers of 
the nodes in the joint Cartesian tree. 
These bits can be decoded in that order in $O(1)$ time per bit, using 
the Cartesian trees for the individual rows, to reconstruct the joint 
Cartesian tree. A query involving both the rows can be answered by 
decoding the first bit that falls within the query range. Thus the 
worst-case query complexity is $O(n)$ for this representation.

If we represent the joint Cartesian tree for both the rows as a 
2D-maxHeap, and an additional $n$ bits to indicate the column 
maxima, we can answer any RMQ query in $O(1)$ time using 
a total of $7n+o(n)$ bits. We now describe how to reduce the space 
to achieve the trade-off described in the statement of the theorem.

The main idea is to represent the ACT of the joint Cartesian tree 
using a succinct tree representation based on tree decomposition. 
The representation decomposes the ACT into $O(n/k)$ microtrees, 
each of size at most $k$, and represents the microtrees using a 
total of $4n+o(n)$ bits (as the ACT has $2n$ nodes), and stores 
several auxiliary structures of total size $O(n \log k / k)$ bits. It 
supports various queries (such as LCA) in constant time by accessing 
a constant number of microtrees and reading a constant number of 
words from the auxiliary structures. Instead of storing the representations 
of microtrees, we show how reconstruct any microtree in $O(k)$
time (in fact, time proportional to its size) using the bit vector $M$ 
(together with additional auxiliary structures of size $O(n \log k / k)$ bits). 
The new representation consists of the 2D-maxHeap for both the
rows ($4n + o(n)$ bits) and the bit vector $M$ that `merges' these two trees 
($n$ bits) in addition to these auxiliary structures. Thus the 
representation uses $5n + O(n \log k / k)$ bits overall, 
and supports RMQ queries in $O(k)$ time.
We now describe this in detail.

We take the ACT of the joint Cartesian tree and partition it into 
$O(n/k)$ microtrees, each of size at most $k$, for some 
paramater $k \ge 2$, using the {\em tree decomposition} algorithm of 
Farzan and Munro~\cite{FM08}. The microtrees produced by the
decomposition algorithm have the property that for each microtree,
there is at most one node such that one of its children is the root
of another microtree.
In addition, two microtrees can share
a common root.  We modify the decomposition so that whenever 
a node $x$ is the root of two microtrees (a node cannot be the root of 
more than two microtrees, as we have a ternary tree in which one 
of the children of every internal node is a leaf), we remove the node 
$x$ from both the microtrees, and make another microtree containing 
$x$ and its second child (which is a leaf). One can show that the 
number of microtrees produced by the modified decomposition is 
still $O(n/k)$. Now, no two microtrees share a node, and thus we 
obtain a {\em partition} of the ACT into microtrees.

Each leaf in the ACT corresponds to a column in the array $A$, and
as mentioned earlier, the leaves of the ACT in preorder correspond to 
the columns of the array $A$ from left to right. The above partitioning 
procedure splits the ACT such that the columns corresponding the
all the nodes in a microtree are in at most two consecutive {\em chunks} 
in the array $A$ (as all the nodes in a microtree have consecutive 
preorder numbers, except when a node has a child outside the 
microtree; and since there is at most one such node, the claim follows).

We store the auxiliary structures to support LCA and rank/select on
leaves in preorder, using $O(n \log k /k)$ bits. As the ACT has $2n$
nodes, we need $4n+o(n)$ bits to store the representations of all the
microtrees. Instead of storing the representations of the microtrees, 
we show how to reconstruct the representation of 
any microtree in time proportional to its size by simply storing the bit
vector $M$. In addition, we also store auxiliary structures to represent 
the correspondence between the microtrees and the positions in the 
array, as explained below.

We construct two bit vectors $B_1$ and $B_2$ of length $n+O(n/k)$
as follows: we initialize both arrays with $n$ zeroes each. We insert a
$1$ after the $i$-th zero in $B_1$ ($B_2$) if position $i$ ($i-1$) is the 
starting (ending) position of a chunk in $A$. We store these two bit vectors 
using the FID structure of Theorem~\ref{thm:fid},
which takes $O(n \log k / k)$ bits of space.
Now, considering the $1$s in $B_1$ as open parentheses and the 
$1$s in $B_2$ as close parentheses, we `merge' the two parenthesis 
sequences to obtain a balanced parenthesis sequence that 
represents the tree structure of the microtrees. We store this sequence 
along with an auxiliary structure to support parenthesis operations,
such as find-open, find-close, excess, rank-open, rank-close~\cite{parentheses}. 
Using all these data structures, we can support the following operations:
(i) number the microtrees in
the sorted order of their preorder number of the leftmost leaf, 
(ii) find the microtree that contains the leaf corresponding to a given 
column index in $A$, and (iii) find the chunk corresponding to a 
given microtree. 

For each microtree $\mu$, we store the bits of $M$ corresponding to the 
leaves of $\mu$ in the order of their preorder numbers. Given these bits, 
we now show how to reconstruct $\mu$, in time linear in the size of $\mu$.
Suppose $C_{\ell}$ and $C_r$ are the two chunks corresponding to $\mu$
(where $C_r$ is empty if the nodes in $\mu$ correspond to a single chunk).
We first observe that all the elements that lie between the chunks $C_{\ell}$
and $C_r$ in $A$ are strictly smaller than the maximum element in the
last column of $C_{\ell}$ as well as the the maximum element in the first
column of $C_r$. Thus any RMQ query whose one end point lies in 
$C_{\ell}$ and the other end point lies in $C_r$ has its answer in one these 
two chunks and never in between these chunks.

Let $i$ be the first column of $C_{\ell}$ and $j$ be the last column of $C_r$.
We first find $t = RMQ(A, [1] \times [i..j])$ and $b = RMQ(A, [2] \times [i..j])$ which return
the positions of the maximum elements in the range $[i..j]$ in the top and 
bottom rows respectively. The first bit of $M$ that we store for the microtree
is a 0 or 1 depending on whether $A[1,t]$ is greater  or less than $A[2,b]$.
Suppose that $A[1,t]$ is the larger of the two (the other case is similar). 
Then the root of $\mu$ has a leaf child which corresponds to the position 
$t$. The leftmost (first) and rightmost (third) subtrees of the root correspond
to all the positions in the chunks $C_{\ell}$ and $C_r$ that are to the left and
right of the position $t$ respectively, and can be constructed recursively using the same
procedure. Since each node of $\mu$ can be `constructed' in $O(1)$ time, 
the overall time to reconstruct $\mu$ is $O(|\mu|)$, where $|\mu|$ denotes 
the size of $\mu$.

To answer an RMQ query that spans both the rows, we use an algorithm 
that answers a query by storing the tree decomposition representation.
Whenever we need to access a microtree, we reconstruct it using the above 
procedure, in $O(k)$ time; all the auxiliary structures are stored explicitly, and 
hence can be accessed in $O(1)$ time. Since the query algorithm accesses
$O(1)$ microtrees, the overall running time is $O(k)$.
Finally, the lower order terms that arise in various substructures above, such as 
FIDs and auxiliary structures for tree representations, which are independent of 
$k$ can be made $O(n / (\log n)^{O(1)})$ using the ideas from~\cite{Patrascu08}.
Thus the overall space used is $5n + O(n \log k/ k)$ bits, for any parameter 
$k = (\log n)^{O(1)}$.
\qed
\end{proof}



\section{Conclusions and Open Problems}

We have given new explicit encodings for the RMQ problem, as well as (in some cases)
efficient data structures whose space usage is close to the sizes of the encodings.  
We have focused on the cases of random matrices (which may
have relevance to practical applications such as OLAP cubes \cite{olap}) and the case
of small values of $m$.  Obviously, the problem of determining the asymptotic complexity of
encoding RMQ for general $m$ remains open.

\bibliographystyle{abbrv}
\bibliography{rangemax}

\begin{thebibliography}{10}

\bibitem{Amir2007}
A.~Amir, J.~Fischer, and M.~Lewenstein.
\newblock Two-dimensional range minimum queries.
\newblock In {\em Proc. 18th Annual Symposium on Combinatorial Pattern
  Matching}, volume 4580 of {\em LNCS}, pages 286--294. Springer-Verlag, 2007.

\bibitem{Atallah2009a}
M.~J. Atallah and H.~Yuan.
\newblock Data structures for range minimum queries in multidimensional arrays.
\newblock In {\em Proc. 20th Annual ACM-SIAM Symposium on Discrete Algorithms},
  pages 150--160. SIAM, 2010.

\bibitem{BenoitDMRRR05}
D.~Benoit, E.~D. Demaine, J.~I. Munro, R.~Raman, V.~Raman, and S.~S. Rao.
\newblock Representing trees of higher degree.
\newblock {\em Algorithmica}, 43(4):275--292, 2005.

\bibitem{Brodal2010}
G.~S. Brodal, P.~Davoodi, and S.~S. Rao.
\newblock {On Space Efficient Two Dimensional Range Minimum Data Structures}.
\newblock In {\em Proc. of European Symposium on Algorithms}, volume 6347 of
  {\em Lecture Notes in Computer Science}, pages 171--182. Springer, 2010.

\bibitem{olap}
S.~Chaudhuri and U.~Dayal.
\newblock An overview of data warehousing and {OLAP} technology.
\newblock {\em SIGMOD Rec.}, 26:65--74, March 1997.

\bibitem{Demaine2009}
E.~D. Demaine, G.~M. Landau, and O.~Weimann.
\newblock On cartesian trees and range minimum queries.
\newblock In {\em Proc. 36th International Colloquium on Automata, Languages
  and Programming}, volume 5555 of {\em LNCS}, pages 341--353. Springer-Verlag,
  2009.

\bibitem{FM08}
A.~Farzan and J.~I. Munro.
\newblock A uniform approach towards succinct representation of trees.
\newblock In J.~Gudmundsson, editor, {\em SWAT}, volume 5124 of {\em Lecture
  Notes in Computer Science}, pages 173--184. Springer, 2008.

\bibitem{DBLP:journals/jacm/FerraginaM05}
P.~Ferragina and G.~Manzini.
\newblock Indexing compressed text.
\newblock {\em JACM}, 52:552--581, 2005.

\bibitem{DBLP:journals/siamcomp/FischerH11}
J.~Fischer and V.~Heun.
\newblock Space-efficient preprocessing schemes for range minimum queries on
  static arrays.
\newblock {\em SIAM J. Comput.}, 40(2):465--492, 2011.

\bibitem{Gabow1984}
H.~N. Gabow, J.~L. Bentley, and R.~E. Tarjan.
\newblock Scaling and related techniques for geometry problems.
\newblock In {\em Proc. 16th Annual ACM Symposium on Theory of Computing},
  pages 135--143. ACM, 1984.

\bibitem{parentheses}
R.~F. Geary, N.~Rahman, R.~Raman, and V.~Raman.
\newblock A simple optimal representation for balanced parentheses.
\newblock {\em Theor. Comput. Sci.}, 368(3):231--246, 2006.

\bibitem{DBLP:journals/siamcomp/GrossiV05}
R.~Grossi and J.~S. Vitter.
\newblock Compressed suffix arrays and suffix trees with applications to text
  indexing and string matching.
\newblock {\em SICOMP}, 35(2):378--407, 2005.

\bibitem{DBLP:reference/algo/HowardV08}
P.~G. Howard and J.~S. Vitter.
\newblock Arithmetic coding for data compression.
\newblock In M.-Y. Kao, editor, {\em Encyclopedia of Algorithms}. Springer,
  2008.

\bibitem{Jacobson89}
G.~Jacobson.
\newblock Space-efficient static trees and graphs.
\newblock In {\em FOCS}, pages 549--554. IEEE, 1989.

\bibitem{Kieffer2009}
J.~C. Kieffer, E.-H. Yang, and W.~Szpankowski.
\newblock Structural complexity of random binary trees.
\newblock In {\em Proc. IEEE International Symposium on Information Theory
  (ISIT)}, pages 635--639, 2009.

\bibitem{Patrascu08}
M.~Patrascu.
\newblock Succincter.
\newblock In {\em FOCS}, pages 305--313. IEEE Computer Society, 2008.

\bibitem{DBLP:conf/soda/RamanRR02}
R.~Raman, V.~Raman, and S.~S. Rao.
\newblock Succinct indexable dictionaries with applications to encoding k-ary
  trees and multisets.
\newblock In {\em SODA}, pages 233--242, 2002.

\bibitem{Sadakane02-ACT}
K.~Sadakane.
\newblock Space-efficient data structures for flexible text retrieval systems.
\newblock In P.~Bose and P.~Morin, editors, {\em ISAAC}, volume 2518 of {\em
  Lecture Notes in Computer Science}, pages 14--24. Springer, 2002.

\bibitem{Sadakane2007}
K.~Sadakane.
\newblock Compressed suffix trees with full functionality.
\newblock {\em Theory of Computing Systems}, 41(4):589--607, 2007.

\bibitem{SN10}
K.~Sadakane and G.~Navarro.
\newblock Fully-functional succinct trees.
\newblock In M.~Charikar, editor, {\em SODA}, pages 134--149. SIAM, 2010.

\bibitem{DBLP:journals/siamcomp/SchmidtS90}
J.~P. Schmidt and A.~Siegel.
\newblock The spatial complexity of oblivious k-probe hash functions.
\newblock {\em SIAM J. Comput.}, 19(5):775--786, 1990.

\bibitem{Vuillemin1980}
J.~Vuillemin.
\newblock A unifying look at data structures.
\newblock {\em Communications of the ACM}, 23(4):229--239, 1980.

\end{thebibliography}
\end{document}